\def\QED{\hskip0.1em\hfill\null\ \null\nobreak\hfill\kern3pt\vbox{\hrule\hbox
   {\vrule\kern1pt\vbox{\kern1.7pt\hbox{$\scriptscriptstyle{QED}$}
    \kern0.2pt}\kern1pt\vrule}\hrule}}
\def\END{\hskip0.1em\hfill\null\ \null\nobreak\hfill\kern3pt\vbox{\hrule\hbox
   {\vrule\kern1pt\vbox{\kern1.7pt\hbox{$\,\,\,\vspace{5pt}$}
    \kern0.2pt}\kern1pt\vrule}\hrule}}
\newtheorem{theorem}{Theorem}[section]
\newtheorem{lemma}{Lemma}[section]
\newtheorem{prop}{Proposition}[section]
\newtheorem{defn}{Definition}[section]
\newtheorem{definition}{Definition}[section]
\newtheorem{example}{Example}[section]
\newtheorem{remark}{Remark}[section]
\DeclareMathOperator{\byd}{{\raisebox{.1ex}{:}{=}}}
\newcommand{\bCd}{\beq \begin{CD}}
\newcommand{\eCd}{\end{CD}\eEq}
\newcommand{\bcd}{\beq \begin{CD}}
\newcommand{\ecd}{\end{CD}\eeq}
\newcommand{\ben}{\begin{enumerate}}
\newcommand{\een}{\end{enumerate}}
\newcommand{\bEq}{\begin{eqnarray}}
\newcommand{\eEq}{\end{eqnarray}}
\newcommand{\beq}{\begin{eqnarray*}}
\newcommand{\eeq}{\end{eqnarray*}}
\newcommand{\bDf}{\begin{definition}\em}
\newcommand{\eDf}{\end{definition}}
\newcommand{\bLm}{\begin{lemma}}
\newcommand{\eLm}{\end{lemma}}
\newcommand{\bPr}{\begin{proposition}}
\newcommand{\ePr}{\end{proposition}}
\newcommand{\bTh}{\begin{theorem}}
\newcommand{\eTh}{\end{theorem}}
\newcommand{\bCr}{\begin{corollary}}
\newcommand{\eCr}{\end{corollary}}
\newcommand{\bRm}{\begin{remark}\em}
\newcommand{\eRm}{\end{remark}}
\newcommand{\bEx}{\begin{example}\em}
\newcommand{\eEx}{\end{example}}
\newcommand{\ie}{{\em i.e$.$} }
\newcommand{\eg}{{\em e.g$.$} }
\newcommand{\der}{\partial}
\newcommand{\cR}{\mathcal{R}}
\newcommand{\wed}{\wedge}
\newcommand{\tht}{\theta}
\newcommand{\lam}{\lambda}
\newcommand{\sig}{\sigma}
\newcommand{\ome}{\omega}
\title{\large{Geometric integration by parts and Lepage equivalents\footnote{Our colleague  and   friend Olga Rossi  passed away in October 2019. This paper 
is an outcome of our collaboration, which we miss heartily, and we dedicate it to her memory.}}}
\author{
{\normalsize 
Marcella Palese$^{a}$, Olga Rossi and  Fabrizio Zanello$^{b}$}
\\ 
{\footnotesize  $^{a}$Department of Mathematics,
University of Torino}
\\
{\footnotesize via C. Alberto 10, 10123 Torino, Italy} 
\\  {\footnotesize e--mail: 
{\sc marcella.palese@unito.it}} 
\\ {\footnotesize $^{b}$Institute of Mathematics, University of G\"ottingen}
\\
{\footnotesize Bunsenstra\ss e 3-5, D-37073  G\"ottingen, Germany} 
\\  {\footnotesize e--mail: 
{\sc fabrizio.zanello@mathematik.uni-goettingen.de}}
}
\date{}
\begin{document}

\maketitle

\begin{abstract}
We compare the integration by parts of contact forms - 
leading to the definition of the interior Euler operator - with the so-called canonical splittings of variational morphisms. 
In particular,  we discuss the possibility of a generalization of the first method to contact forms of lower degree. 
We define a suitable Residual operator for this case and, working out an original
conjecture by Olga Rossi,
we recover the Krupka-Betounes equivalent for first order field theories. A generalization to the second order case is discussed.
\end{abstract}

\noindent {\bf Key words}: Interior Euler operator; Residual operator; geometric integration by parts; Poincar\'e-Cartan form; Lepage equivalent.

\noindent {\bf 2010 MSC}: 53Z05,58A20,58Z05.

\section{Introduction}

The Euler--Lagrange operator can be geometrically described by means of two interrelated geometric objects (and corresponding geometric integration by parts procedures), the one based on the concept of differential forms and exterior differential modulo contact structures, the other based on the interpretation of variational objects as fibered morphisms \cite{HK:HigherOrder,KMS:Natural,FF:Natural} etc.

Following an approach inaugurated by the works of Cartan and Lepage, the finite order variational sequence was introduced and developed by Krupka; see \eg  \cite{Krupka:Variational}. The problem of the representation of the finite order variational sequence (whose objects, we recall, are equivalence classes of local differential forms) has been discussed in terms of the so called {\em interior Euler operator}; see \eg  \cite{KrbekMusi:Representation} and \cite{Palese:VariationalSeq}.

On the other hand variational morphisms \cite{FF:Natural} 
not only provide a geometric formulation of the calculus of variations, but in general of a wide class of differential operators. Their most relevant property is that,
under certain conditions,
they admit canonical and algorithmic splittings and, by the introduction of a connection on the base manifold and a connection on the considered fiber bundle, globality and uniqueness properties of these splittings can be assured.

The aim of this paper is to investigate the relation between these two approaches which use different geometric integration by parts techniques. 

We perform the identification of contact forms with variational morphisms  for $1$-contact forms of degree at most $n+1$, where $n$ is the dimension of the base manifold of the considered fiber bundles. The two integration by parts techniques are directly compared in the case of $1$-contact $n$-horizontal $(n+1)$-forms and
we show that the two approaches are straightly equivalent, whilst in the case of $1$-contact forms of lower degree the situation is more intricate, nevertheless, we obtain a comparable local splitting.

It is noteworthy  that we obtain a compact and general expression for the splitting of a local variational morphism of any rank, which was not known in detail up to now (in \cite{FF:Natural} only a sketch of how the splitting would present in local coordinates is given).

Furthermore, an extension of the Krbek-Musilov\'a splitting, as well as the construction of a local Interior Euler operator for contact forms of lower degree is obtained {\em for any contact degree $k$}.
We work locally and the discussion of the globality of this splitting will be the subject of a separate paper.

Restricting to local splittings does not affect the main result obtained in the present paper, which is the definition, for the lower order case, of a suitable Residual operator (which even in the already known case  of $1$-contact $n$-horizontal $(n+1)$-forms is in general only locally defined).
The aim is to construct, by a recurrence formula due to Olga Rossi, higher order (local) Lepage equivalents of a given Lagrangian. 

Indeed, we recover the Krupka-Betounes Lepage equivalent for first order field theories and we obtain an elegant extension to the second order case. Again, globality properties of the latter deserve further study.

\section{Contact structure and geometric integration by parts}

Prolongations of fibered manifolds are a basic tool for the geometric formulation of the calculus of variations. We %
recall the decomposition  of pull-backs of differential forms on jet prolongations of fibered manifolds, performed by means of the jet projections and the holonomic lift of tangent vectors (a canonical construction which allows a splitting of the projection along the affine fibrations defining the contact structure of tangent vectors in two components with remarkable properties). Moreover the decomposition of forms leads to the introduction of the so-called contact forms, which reveal to be another fundamental concept in the calculus of variations. To fix notation we follow \cite{Krupka:Intro} ; other references on jet spaces are \cite{Saunders:Jets} and \cite{KMS:Natural}. 

We recall that by a \textit{fibered manifold structure} on a $C^\infty$ manifold $Y$ we mean a triplet $(Y,X,\pi)$, where $X$ is a $C^\infty$ manifold called the \textit{base} and $\pi\colon Y\longrightarrow X$ is a surjective submersion of class $C^\infty$ called the \textit{projection}. When dealing with local aspects of fibered manifolds, we will always use the so-called \textit{fibered charts} (i.e. charts adapted to the fibration).  
Let $Y$ be a fibered manifold with base $X$ and projection $\pi$, let $n=\text{dim}X$ and $m=\text{dim}Y-n$. We denote by $J^rY$, where $r\ge 0$ is any integer, the set of $r$-jets $J^r_x\gamma$ of $C^r$ \textit{sections} of $Y$ with source $x\in X$ and target $y=\gamma(x)\in Y$ (for more details on jet spaces see \cite{Saunders:Jets} and \cite{KMS:Natural}); we fix the notation $J^0Y=Y$.
For any $s$ such that $0\le s\le r$ we have surjective mappings, the \textit{canonical jet projections},  $\pi^r_s\colon J^rY\longrightarrow J^sY$ and $\pi^r\colon J^rY\longrightarrow X$, defined by  $\pi^r_s(J^r_x\gamma) = J^s_x\gamma$, $\pi^r(J^r_x\gamma) = x$.
Let $(V,\psi)$, $\psi=(x^i,y^\sigma)$, be a fibered chart on $Y$ and let $(U,\varphi)$, $\varphi=(x^i)$, be the associated chart on $X$. 

By setting $V^r=(\pi^r_0)^{-1}(V)$, a chart on the set $J^rY$ \textit{associated} with the fibered chart $(V,\psi)$ is given by  $(V^r,\psi^r)$   $\psi^r=(x^i,y^\sigma,y^\sigma_{j_1},y^\sigma_{j_1j_2},\dots,y^\sigma_{j_1j_2\dots j_r})$, with $1\le i\le n,\quad 1\le\sigma\le m$, $1\le j_1\le j_2\le\dots\le j_k\le n$, $ k=1,2,3,\dots,r$.
The set of associated charts $(V^r,\psi^r)$, such that the fibered charts $(V,\psi)$ constitute a smooth atlas on $Y$, is a smooth atlas on $J^rY$. With this smooth structure $J^rY$ is called the $r$-\textit{jet prolongation} of the fibered manifold $Y$.

Let $Y$ be a fibered manifold with base $X$ and projection $\pi$. 
Let $\Xi$ be a $\pi$-projectable vector field on $Y$, expressed in a fibered chart $(V,\psi)$, $\psi=(x^i,y^\sigma)$, by
$\Xi=\xi^i\frac{\partial}{\partial x^i}+\Xi^\sigma\frac{\partial}{\partial y^\sigma}$, then its $s$-th prolongation  $J^s\Xi$ is expressed in the associated chart $(V^s,\psi^s)$ by
\beq
J^s\Xi=\xi^i\frac{\partial}{\partial x^i}+\Xi^\sigma\frac{\partial}{\partial y^\sigma}+\sum_{k=1}^s\sum_{j_1\le j_2\le\dots\le j_k}
\Xi^\sigma_{j_1j_2\dots j_k} \frac{\partial}{\partial y^\sigma_{j_1j_2\dots j_k}},
\eeq
where $
\Xi^\sigma_{j_1j_2\dots j_k}=d_{j_k}\Xi^\sigma_{j_1j_2\dots j_{k-1}}-y^\sigma_{j_1j_2\dots j_{k-1}i}\frac{\partial\xi^i}{\partial x^{j_k}}$.

Let $J^{r+1}_x\gamma\in J^{r+1}Y$. To any tangent vector $\xi$ of $J^{r+1}Y$ at the point $J^{r+1}_x\gamma$ is assigned a tangent vector of $J^rY$ at the point $\pi^{r+1}_r(J^{r+1}_x\gamma)=J^r_x\gamma$ by
$
h\xi:=TJ^r\gamma\circ T\pi^{r+1}(\xi)$. We get a vector bundle morphism $h\colon TJ^{r+1}Y\longrightarrow TJ^rY$ over the jet projection $\pi^{r+1}_r$ called the \textit{horizontalization}, and $h\xi$ is called the \textit{horizontal component} of $\xi$. 
Let $\xi$  be given in a fibered chart $(V,\psi)$, $\psi=(x^i,y^\sigma)$ as
$
\xi=\xi^i\frac{\partial}{\partial x^i}\bigg|_{J^{r+1}_x\gamma}+\sum_{k=0}^{r+1}\sum_{j_1\le j_2\le\dots\le j_k}\Xi^\sigma_{j_1j_2\dots j_k}\frac{\partial}{\partial y^\sigma_{j_1j_2\dots j_k}}\bigg|_{J^{r+1}_x\gamma}$, then
\beq
h\xi=\xi^i d_i \byd \xi^i \big( \frac{\partial}{\partial x^i}\bigg|_{J^r_x\gamma}+\sum_{k=0}^r\sum_{j_1\le j_2\le\dots\le j_k}y^\sigma_{j_1j_2\dots j_ki}\frac{\partial}{\partial y^\sigma_{j_1j_2\dots j_k}}\bigg|_{J^r_x\gamma} \big) \,,
\eeq
where the $i$-th \textit{formal derivative operator} $d_i$ is a vector field along $\pi^{r+1}_r$ . 

We can assign to every tangent vector $\xi\in T_{J^{r+1}_x\gamma}J^{r+1}Y$ a tangent vector $p\xi\in T_{J^r_x\gamma}J^rY$ by the decomposition
$
T\pi^{r+1}_r(\xi)=h\xi+p\xi
$, where
$p\xi$ is called the \textit{contact component} of the vector $\xi$.
Then 
\beq
p\xi=\sum_{k=0}^r\sum_{j_1\le j_2\le\dots\le j_k}\Big(\Xi^\sigma_{j_1j_2\dots j_k}-y^\sigma_{j_1j_2\dots j_ki}\xi^i\Big)\frac{\partial}{\partial y^\sigma_{j_1j_2\dots j_k}}\bigg|_{J^r_x\gamma}.
\eeq

For any open set $W\subset Y$, $\Omega^r_qW$ denotes the $C^\infty$-module of $q$-forms on the open set $W^r=(\pi^r_0)^{-1}(W)$ in $J^rY$, and $\Omega^rW$ is the exterior algebra of differential forms on $W^r$. 
In order to study the structure of the components of a form $\rho\in\Omega^r_qW$, it will be convenient to introduce a \textit{multi-index notation}.
A \textit{multi-index} $I$ is an ordered $k$-tuple $I=(i_1i_2\dots i_k)$, where $k=1,2,\dots,r$ and the entries are indices such that $1\le i_1,i_2,\dots,i_k\le n$. The number $k$ is the \textit{lenght} of $I$ and is denoted by $|I|$. If $1\le j\le n$ is any integer, we denote by $Ij$ the multi-index $Ij=(i_1i_2\dots i_kj)$.

The notion of horizontalization of vectors can be used to define a morphism $h\colon\Omega^rW\longrightarrow\Omega^{r+1}W$ of exterior algebras. \\
Let $\rho\in\Omega^r_qW$, with $q\ge 1$, and $J^{r+1}_x\gamma\in W^{r+1}$. Consider the pullback  
\bEq
\label{eq:pullbk}
& &   (\pi^{r+1}_r)^\ast \rho(J^{r+1}_x\gamma)(\xi_1,\xi_2,\dots,\xi_q) =  \nonumber \\
& &   =  \rho(J^r_x\gamma)(T\pi^{r+1}_r(\xi_1),T\pi^{r+1}_r(\xi_2),\dots,T\pi^{r+1}_r(\xi_q))
\eEq
on any tangent vectors $\xi_1,\xi_2,\dots,\xi_q$ of $J^{r+1}Y$ at the point $J^{r+1}_x\gamma$. Decompose each of these vectors into the horizontal and contact components,
$
T\pi^{r+1}_r(\xi_l)=h\xi_l+p\xi_l$, 
and set 
\beq
h\rho(J^{r+1}_x\gamma)(\xi_1,\xi_2,\dots,\xi_q):=\rho(J^r_x\gamma)(h\xi_1,h\xi_2,\dots,h\xi_q).
\eeq
This formula defines a $q$-form $h\rho\in\Omega^{r+1}_qW$, while for $0$-forms
$hf:=(\pi^{r+1}_r)^\ast f$.
It follows that 
$h\rho(J^{r+1}_x\gamma)(\xi_1,\xi_2,\dots,\xi_q)$
vanishes whenever at least one of the vectors is $\pi^{r+1}$-vertical. Thus, the $q$-form $h\rho$ must be $\pi^{r+1}$-\textit{horizontal}. In particular $h\rho=0$ whenever $q\ge n+1$. The component $h\rho$ is called the \textit{horizontal component} of $\rho$. 
We say that $\rho\in\Omega^r_1W$ is \textit{contact} if $h\rho=0$.
Let us now set, 
 for $1\le k\le q$,
\beq
 & & p_k\rho(J^{r+1}_x\gamma)(\xi_1,\xi_2,\dots,\xi_q):= \\
 & & :=\frac{1}{k!(q-k)!}\sum_{\sigma\in\mathcal{P}_q}(-1)^{|\sigma|}\rho(J^r_x\gamma)(p\xi_{\sigma(1)},\dots,p\xi_{\sigma(k)},h\xi_{\sigma(k+1)},\dots,h\xi_{\sigma(q)})
\eeq
where $\mathcal{P}_q$ is the set of permutations of $q$ elements and $|\sigma|$ is the sign of the permutation $\sigma\in\mathcal{P}_q$. Note that for $k=0$, then we put $p_0\rho = h\rho$, while for $0$-forms 
$p_kf=0$, $k \ge 1$. 
In particular, given a $q$-form $\eta$ 
\beq
\eta=\sum_{s=0}^qA^{J_1}_{\sigma_1}\dots^{J_s}_{\sigma_s i_{s+1}\dots i_q}dy_{J_1}^{\sigma_1}\wedge\dots\wedge dy_{J_s}^{\sigma_s}\wedge dx^{i_{s+1}}\wedge\dots\wedge dx^{i_q} \,,
\eeq 
the $k$-contact component of $\eta$ has the chart expression
\beq
p_k\eta=B^{J_1}_{\sigma_1}\dots^{J_k}_{\sigma_k i_{k+1}\dots i_q}\omega_{J_1}^{\sigma_1}\wedge\dots\wedge\omega_{J_k}^{\sigma_k}\wedge dx^{i_{k+1}}\wedge\dots\wedge dx^{i_q},
\eeq
with 
$\omega^\sigma_J=dy^\sigma_J-y^\sigma_{Ji}dx^i$,
and 
\beq
B^{J_1}_{\sigma_1}\dots^{J_k}_{\sigma_k i_{k+1}\dots i_q}=\sum_{s=k}^q\binom{s}{k}A^{J_1}_{\sigma_1}\dots^{J_kJ_{k+1}}_{\sigma_k\sigma_{k+1}}\dots^{J_s}_{\sigma_s [i_{s+1}\dots i_q}y^{\sigma_{k+1}}_{J_{k+1}i_{k+1}}\dots y^{\sigma_s}_{J_s i_s]}
\eeq
where the antisymmetrization in the right hand side of the last equation is performed only on the indices $i_{k+1}\dots i_si_{s+1}\dots i_q$. 

For any $\rho\in\Omega^r_qW$,  $q\ge 0$,  the \textit{canonical decomposition} of the form $\rho$ is given as 
\beq
(\pi^{r+1}_r)^\ast\rho=h\rho+p_1\rho+p_2\rho+\dots+p_q\rho.
\eeq
We can see that the canonical decomposition of forms gives rise to the splitting of the pull-back of the exterior derivative 
\beq
(\pi^{r+2}_r)^\ast d\rho=d_H\rho+d_C\rho 
:=\sum_{k=0}^qp_k dp_k\rho + \sum_{k=0}^qp_{k+1}dp_k\rho \,,,
\eeq
and characterized by the identities $d_H\circ d_H=0$, $d_C\circ d_C=0$, $d_C\circ d_H=-d_H\circ d_C$; furthermore,  if $\rho$ is a $q$-form and $\eta$ is an $s$-form, both on $J^rY$, then
\beq
& &  d_H(\rho\wedge\eta)=d_H\rho\wedge(\pi^{r+2}_r)^\ast\eta+(-1)^q(\pi^{r+2}_r)^\ast\rho\wedge d_H\eta \\
& &  d_C(\rho\wedge\eta)=d_C\rho\wedge(\pi^{r+2}_r)^\ast\eta+(-1)^q(\pi^{r+2}_r)^\ast\rho\wedge d_C\eta.
\eeq

\subsection{The interior Euler operator}

We recall some technical features of the interior Euler operator, seen as a tool which allows to pass in a univocal way from equivalence classes of local differential forms in the variational sequence, to (global) differential forms in the representation sequence. First we shortly recall the \textit{finite order} variational sequence as introduced by Krupka in \cite{Krupka:Variational}. A complete description of this subject involves some topics of sheaf theory and sheaf cohomology; however, since our purpose is to make direct calculations on the representation of the variational sequence, we just refer to \cite{Krupka:Intro} for more details about those aspects. Then we shortly recall the notion of \textit{Lie derivative} of forms 
with respect to a `vector field along a map' 
and some results about integration by parts formulae which lead directly to the definition of the interior Euler operator. For more details and other related topics we refer to \cite{KrbekMusi:Representation} and \cite{Palese:VariationalSeq}. \\

Let $\Omega^r_q$, $q\ge 0$, be the direct image of the sheaf of smooth $q$-forms over $J^rY$ by the jet projection $\pi^r_0$. We denote by
\beq
\Omega^r_{q,\text{c}}=
\begin{cases}
\text{ker}\,p_0	\quad	 \text{for}\,\, 1\le q\le n, \\
\text{ker}\,p_{q-n}	\quad \text{for}\,\, n+1\le q\le \text{dim}J^rY
\end{cases}
\eeq
the sheaf of contact $q$-forms, if $q\le n$, or the sheaf of strongly contact $q$-forms, if $n+1\le q\le \text{dim}J^rY$. 

We set
\beq
\Theta^r_q=\Omega^r_{q,\text{c}}+d\Omega^r_{q-1,\text{c}}
\eeq
where $d\Omega^r_{q-1,\text{c}}$ is the image sheaf of $\Omega^r_{q-1,\text{c}}$ by the exterior derivative $d$. The sequence of sheaves 
$
\{0\}\to\Theta^r_1\to\dots\to\Theta^r_n\to\Theta^r_{n+1}\to\dots\to\Theta^r_P\to\{0\}
$,
with $d$ the  exterior derivatives and $P$ being the maximal nontrivial degree, is an exact subsequence of the de Rham sequence. The  acyclic resolution of the constant sheaf $\mathbb{R}_Y$ over $Y$, given by 
$
\{0\}\to\mathbb{R}_Y \to\Omega^r_*/\Theta^r_*$, is called the \textit{variational sequence of order} $r$. We denote the quotient mappings as $E^r_q\colon [\rho]\in\Omega^r_q/\Theta^r_q\longrightarrow E^r_q([\rho])=[d\rho]\in\Omega^r_{q+1}/\Theta^r_{q+1}$ Note that, in particular, the mappings $E^r_n$ and $E^r_{n+1}$ correspond to the Euler-Lagrange mapping and to the Helmholtz-Sonin mapping of calculus of variations, respectively.

\begin{defn}
Let $(V,\psi)$, $\psi=(x^i,y^\sigma)$, be a fibered chart on $Y$ and let $\rho$ be a differential $q$-form on $J^rY$.
The \textit{Lie derivative of a $q$-form $\rho$ on with respect to a vector field $h\Xi$ along the map $\pi^{r+1}_r$} is given by
\beq
\pounds_{h\Xi}^{\pi^{r+1}_r}\rho=
(\pi^{r+1}_r)^\ast_{h\Xi} ( h\Xi \lrcorner d\rho)+d (\pi^{r+1}_r)^\ast_{h\Xi} (h\Xi \lrcorner\rho) \,.
\eeq\end{defn}
\noindent Here $(\pi^{r+1}_r)^\ast_{h\Xi}$ is a pull-back defined according to \cite{KrbekMusi:Representation}.

In particular, let $d_i$ be the $i$-th \textit{formal derivative operator} seen as a (horizontal) vector field along a map. We have 
$\pounds^{\pi^{r+1}_r}_{\text{d}_i}dx^j=0 $, $
\pounds^{\pi^{r+1}_r}_{\text{d}_i}dy^\sigma_J=dy^\sigma_{Ji} $, $ \pounds^{\pi^{r+1}_r}_{\text{d}_i}\omega^\sigma_J=\omega^\sigma_{Ji}$, while for $f$ a zero form, we have 
$\pounds^{\pi^{r+1}_r}_{\text{d}_i}f=\frac{\partial f}{\partial x^i}+\sum_{|J|=0}^ry^\sigma_{Ji}\frac{\partial f}{\partial y^\sigma_J} = d_i f $.

Accordingly, by a slight abuse of notation, we will use at any degree the symbol $\text{d}_i=\pounds^{\pi^{r+1}_r}_{\text{d}_i}$, and we will call it the \textit{total derivative of forms} with respect to the coordinate $x^i$.

We recall that the total derivatives of forms enjoy the following properties
\begin{enumerate} 
\item the form $\text{d}_H\rho$ can be locally decomposed as
\beq
\text{d}_H\rho=(-1)^q\text{d}_i\rho\wedge dx^i
\eeq
\item the Leibniz rule holds for total derivatives of the exterior product of forms $\rho$ and $\eta$
\beq
\text{d}_i(\rho\wedge\eta)=\text{d}_i\rho\wedge\eta+\rho\wedge\text{d}_i\eta
\eeq
\item let $(\bar{V},\bar{\psi})$, $\bar{\psi}=(\bar{x}^j,\bar{y}^\nu)$, be a fibered chart on $Y$, such that $V\cap\bar{V}\ne\emptyset$ and let $\bar{\text{d}}_j$ be the total derivative with respect to the coordinate $\bar{x}^j$. Then the  transformation rule $\text{d}_i\rho=\frac{\partial\bar{x}^j}{\partial x^i}\bar{\text{d}}_j\rho
$ holds.
\item the total derivatives commute, \ie
\beq
\text{d}_i\text{d}_j\rho=\text{d}_j\text{d}_i\rho.
\eeq
\end{enumerate}
This last property allows us to use the notation 
$\text{d}_J=\text{d}_{j_s}\circ\dots\circ\text{d}_{j_1} $, where $J=(j_1\dots j_s)$ is a multi-index.

We consider the generalization of the integration by parts to differential forms based on the above concept of total derivative of forms  and due to \cite{KrbekMusi:Representation}. 

Let $(V,\psi)$, $\psi=(x^i,y^\sigma)$, be a fibered chart on $Y$ and $\rho\in\Omega^r_{n+k}V$ a form. Let $\text{p}_k\rho$ be expressed as
\beq
\label{eq:IE}
\text{p}_k\rho=\sum_{|J|=0}^r\omega^\sigma_J\wedge\eta^J_\sigma.
\eeq
Then there exists the decomposition
\bEq
\label{eq:32}
p_k\rho = \mathcal{I}(\rho) + p_kdp_k\mathcal{R}(\rho)
\eEq
where $\mathcal{I}$  is the \textit{interior Euler operator},  $\mathcal{R}$  is  the \textit{Residual  operator}, and $\mathcal{R}(\rho)$ is a local $k$-contact $(n+k-1)$-form.

There exists a unique decomposition as above such that $\mathcal{I}$ is $\mathbb{R}$-linear, which is therefore globally defined.
In local coordinates we have
\bEq
\label{eq:IEO}
\mathcal{I}\colon\Omega^r_{n+k}W\ni\rho\longrightarrow\mathcal{I}(\rho)=\frac{1}{k}\omega^\sigma\wedge\sum_{|J|=0}^r(-1)^{|J|}\text{d}_J(\frac{\partial}{\partial y^\sigma_J}\lrcorner\,p_k\rho)\in\Omega^{2r+1}_{n+k}W \,.
\eEq

Let $W\subset Y$ be an open set and let $\rho\in\Omega^r_{n+k}W$, $1\le k\le\text{dim}J^rY-n$, be a form. Then the following intrinsic properties uniquely characterize the interior Euler operator:
\begin{enumerate}
\item[(a)] $(\pi^{2r+1}_r)^\ast\rho-\mathcal{I}(\rho)\in\Theta^{2r+1}_{n+k}W$;
\item[(b)] $\mathcal{I}(p_kdp_k\mathcal{R}(\rho))=0$;
\item[(c)] $\mathcal{I}^2(\rho)=(\pi^{4r+3}_{2r+1})^\ast\mathcal{I}(\rho)$;
\item[(d)] $\text{ker}(\mathcal{I})=\Theta^r_{n+k}W$.
\end{enumerate}

\subsection{Variational morphisms and canonical splittings}

We now recall shortly  the definition and the basic properties of variational morphisms; see \cite{FF:Natural}. In view of a comparison with the Krbek-Musilov\'a geometric integration by parts, we discuss their algorithmic splitting properties, which correspond to the possibility of performing a global and \textit{covariant} integration by parts. We distinguish the case of codegree $s=0$ from the case $0<s\le n$ and include some results about the uniqueness properties of the aforementioned splittings.

\begin{defn}
\label{defin:morph}
Let $\mathcal{E} = (E,X, \tilde{\pi}, \mathbb{R}^l)$ be a vector bundle and $\pi\colon Y\longrightarrow X$ an arbitrary fiber bundle, both over $X$, with $\text{dim}X=n$,
and let $A_{q}(X)$ denote the bundle of $q$-forms on $X$.
Let $t$, $r$ and $s$ be integers.
 A bundle morphism
\beq
\mathbb{V}\colon J^{t}Y\longrightarrow (J^{r}\mathcal{E})^\ast\otimes A_{n-s}(X)
\eeq
is called a \textit{variational} $\mathcal{E}$\textit{-morphism on} $Y$. The (minimal) integer $t$ is called the \textit{order} of $\mathbb{V}$, $r$ is called the \textit{rank} and $(n-s)$ is called the \textit{degree} of $\mathbb{V}$ (being $s$ the \textit{codegree}).
\end{defn}

A fibered connection on $\mathcal{E}$ (\ie a linear connection $\Gamma^a_{bi}$ on $X$ and a connection $\Gamma^A_{Bi}$ on $\mathcal{E}$) induces on 
$(J^r\mathcal{E})^\ast\otimes A_{n-s}(X)$ a set of local fibered coordinates $(x^i;\hat{v}^{i_1\dots i_s}_A,\dots,\hat{v}^{i_1\dots i_sj_1\dots j_r}_A)$ so that 
a variational morphism $\mathbb{V}$ can be locally given there as
\beq
<\mathbb{V}|J^r\Xi>=\frac{1}{s!}\Bigl[\hat{v}^{i_1\dots i_s}_A\hat{\Xi}^A+\hat{v}^{i_1\dots i_sj}_A\hat{\Xi}^A_j+\dots+\hat{v}^{i_1\dots i_sj_1\dots j_r}_A\hat{\Xi}^A_{j_1\dots j_r}\Bigr]\otimes ds_{i_1\dots i_s}
\eeq
where $ds_{i_1\dots i_s}=\frac{\partial}{\partial x^{i_s}}\lrcorner\,\dots\,\lrcorner\,\frac{\partial}{\partial x^{i_1}}\lrcorner\,ds$, if $ds$ is the volume density on the base manifold $X$.
Each coefficient $\mathbb{V}_m=\frac{1}{s!}\hat{v}^{i_1\dots i_sj_1\dots j_m}_A$ of order  $0\le m\le r$ is the coefficient of a global variational morphism, called the $m$\textit{-rank term of} $\mathbb{V}$ (if  $m=r$ it is called the \textit{highest rank term of} $\mathbb{V}$).

Let now $\mathbb{Q}\colon J^{t}Y\longrightarrow A_{n-s}(X)$ be a morphism of rank $r=0$. The \textit{divergence} of $\mathbb{Q}$ is the variational morphism $\text{Div}(\mathbb{Q})\colon J^{t+1}Y\longrightarrow A_{n-s+1}(M)$ such that
\beq
\text{Div}(\mathbb{Q})\circ J^{t+1}\sigma=d(\mathbb{Q}\circ J^t\sigma)
\eeq
for each section $\sigma\colon X\longrightarrow Y$.

Variational morphisms admit canonical and algorithmic splittings corresponding to global and covariant integration by parts.
We distinguish two cases.

\begin{itemize}

\item the case of codegree $s=0$.

Let $\mathbb{V}\colon J^{t}Y\longrightarrow (J^{r}\mathcal{E})^{\ast}\otimes A_n(X)$ be a variational $\mathcal{E}$-morphism of codegree $s=0$. Then we can define two global variational $\mathcal{E}$-morphisms
\beq
 & & \mathbb{E}\equiv\mathbb{E}(\mathbb{V})\colon J^{t+r}C\longrightarrow\mathcal{E}^\ast\otimes A_n(X) \\
 & & \mathbb{T}\equiv\mathbb{T}(\mathbb{V})\colon J^{t+r-1}C\longrightarrow(J^{r-1}\mathcal{E})^\ast\otimes A_{n-1}(X)
\eeq
such that the following splitting property holds true:
\bEq\label{eq:split}
<\mathbb{V}|J^r\Xi>=<\mathbb{E}|\Xi>+\text{Div}(<\mathbb{T}|J^{r-1}\Xi>)
\eEq
for each section $\Xi$ of $\mathcal{E}$. The variational morphism $\mathbb{E}$ and $\mathbb{T}$ are called the volume part and the boundary part of $\mathbb{V}$, respectively.\end{itemize}

In particular, 
we locally have:
\beq
<\mathbb{E}|\Xi>=\Big[\Big(\hat{v}_A-\nabla_{j_1}\hat{v}_A^{j_1}+\dots +(-1)^r\nabla_{j_1\dots j_r}\hat{v}_A^{j_1\dots j_r}\Big)\Xi^A\Big]\otimes ds
\eeq
and
\beq
<\mathbb{T}|J^{r-1}\Xi>=\Bigl[\hat{t}^i_A\hat{\Xi}^A+\hat{t}^{ij_1}_A\hat{\Xi}^A_{j_1}+\dots+\hat{t}^{ij_1\dots j_{r-1}}_A\hat{\Xi}^A_{j_1\dots j_{r-1}}\Bigr]\otimes ds_i
\eeq
where the coefficients of $\mathbb{T}$ are given by the recurrence relations
\bEq \label{eq:Tcomponents}
& &  \hat{t}^{ij_1\dots j_{r-1}}_A=\hat{v}^{ij_1\dots j_{r-1}}_A   \nonumber \\ 
& &  \hat{t}^{ij_1\dots j_{r-2}}_A=\hat{v}^{ij_1\dots j_{r-2}}_A-\nabla_l\hat{t}^{lij_1\dots j_{r-2}}_A \\ 
& &  \dots \nonumber \\ 
& &  \hat{t}^i_A=\hat{v}^i_A-\nabla_l\hat{t}^{li}_A \,. \nonumber
\eEq

A similar splitting formula can be obtained for variational morphisms of higher codegree. The expressions in coordinates are in this case more complicated due to the presence of some antisymmetries in the indices of the local coefficients; see in particular \cite{FF:Natural}, p. $161$--$162$. 
The concept of a reduced morphism (with respect to a fibered connection) is shown to be necessary.

\begin{defn} \label{def:1.5}
Let $\mathbb{V}\colon J^{t}Y\longrightarrow (J^{r}\mathcal{E})^{\ast}\otimes A_{n-s}(X)$ be a variational morphism.  \\
Let  $\mathbb{V}_m=\frac{1}{s!}(\hat{v}_A^{i_1\dots i_sj_1\dots j_m})\otimes ds_{i_1\dots i_s}$ be the coefficient of its term of rank $0\le m\le r$. 

The term $\mathbb{V}_m$ is said to be \textit{reduced with respect to the fibered connection} $(\Gamma^a_{bi},\Gamma^A_{Bi})$ if $\hat{v}_A^{[i_1\dots i_sj_1]j_2\dots j_m}=0$. The variational morphism $\mathbb{V}$ is \textit{reduced} if all its terms are reduced.
\end{defn}
Notice that when $n=\text{dim}(X)=1$, \eg in the case of Mechanics, all variational morphisms are reduced.
However, the property of being reduced in general depends on the choice of the a fibered connection, whenever the rank is at least two.

\begin{itemize}

\item the case of codegree $s\ge 1$.

Let  now $\mathbb{V}\colon J^{t}Y\longrightarrow (J^{r}\mathcal{E})^{\ast}\otimes A_{n-s}(X)$ be a global variational $\mathcal{E}$-morphism of codegree $s\ge 1$. Then we can define two global variational $\mathcal{E}$-morphisms
\beq
& & \mathbb{E}\equiv\mathbb{E}(\mathbb{V})\colon J^{t+r}C\longrightarrow(J^r\mathcal{E})^\ast\otimes A_{n-s}(X) \\
& &  \mathbb{T}\equiv\mathbb{T}(\mathbb{V})\colon J^{t+r-1}C\longrightarrow(J^{r-1}\mathcal{E})^\ast\otimes A_{n-s-1}(X)
\eeq
where $\mathbb{E}$ is a {\em reduced} variational morphism and such that the following holds true:
\bEq
\label{eq:splitcodeg}
<\mathbb{V}|J^r\Xi>=<\mathbb{E}|J^r\Xi>+\text{Div}(<\mathbb{T}|J^{r-1}\Xi>)
\eEq
for each section $\Xi$ of $\mathcal{E}$. Only a  sketch of the expression in local coordinates of the volume part  $\mathbb{E}$ and the  boundary part  $\mathbb{T}$  can be found in \cite{FF:Natural}.

\end{itemize}

Let a fibered connection be fixed; the volume part is uniquely determined, while the boundary part is determined modulo a divergenceless term.
When $r\ge 2$ one can proceed by further splitting 
\bEq
\label{eq:33}
<\mathbb{T}|J^{r-1}\Xi>=<\mathbb{S}|J^{r-1}\Xi>+\text{Div}(<\mathbb{Q}|J^{r-2}\Xi>)\,,
\eEq
where the variational morphism $\mathbb{S}\colon J^{t+2r-2}Y\longrightarrow (J^{r-1}\mathcal{E})^{\ast}\otimes A_{n-s-1}(X)$ is reduced by construction and uniquely determined.

\section{Comparison of the two approaches and new results}

We present some original results which clarify the similarities and differences between the two integration by parts methods described above.

The basic idea is that $1$-contact forms of degree $n+1$ can be seen as variational morphisms and, {\em viceversa}, to each variational morphism a $1$-contact form of degree $n+1$ can be associated. In Proposition \ref{prop:Volume}, we prove the equivalence of decompositions \eqref{eq:32} 
and \eqref{eq:split} for $1$-contact $(n+1)$-forms (which we shall call \textit{top forms}, because they are of the highest horizontal degree), seen as variational morphisms of codegree $s=0$.

\subsection{Contact forms as (local) variational morphisms}

\label{sec:formsmorph}
Consider an arbitrary bundle $\pi\colon Y\longrightarrow X$ with $n=\text{dim}X$. Let $U\subseteq X$ be an open subset and let $W=\pi^{-1}(U)$ be the "tube" over $U$. Consider $\rho\in\Omega^r_qW$ a $1$-contact $q$-form on $W^r=(\pi^r_0)^{-1}(W)$, with $q\le n+1$. Then, if $(W^r,\psi^r)$ is a local chart on $J^rY$ associated with the fibered chart $(W,\psi)$, $\psi=(x^i,y^\sigma)$ on $Y$, we can write
\beq
p_1\rho=\sum_{|J|=0}^r\omega^\sigma_J\wedge\eta^J_\sigma\,\in\Omega^{r+1}_q W
\eeq
where $\eta^J_\sigma$ are horizontal $(q-1)$-forms defined on $W^{r+1}$ and thence can be expressed as
\beq
\eta^J_\sigma=A^{i_1\dots i_sJ}_\sigma(J^{r+1}y)ds_{i_1\dots i_s}, \qquad s=n-(q-1). 
\eeq

Now, considering the vector bundle $V(W)$ whose sections are vertical vector fields over $W \to U$ and recalling that $J^rV(W)\cong V(J^rW)$, we can define according to Definition \ref{defin:morph} a variational morphism $\mathbb{V}_\rho\colon J^{r+1}W\longrightarrow(J^rV(W))^\ast\otimes A_s(U)$ such that:
\beq
 & & <\mathbb{V}_\rho|J^r\Xi>=J^r\Xi\lrcorner p_1\rho=\Big[\sum_{|J|=0}^rA^{i_1\dots i_sJ}_\sigma\Xi^\sigma_J\Big]\otimes ds_{i_1\dots i_s}= \\
& & =\Big[A^{i_1\dots i_s}_\sigma\Xi^\sigma+A^{i_1\dots i_sj_1}_\sigma\Xi^\sigma_{j_1}+\dots +A^{i_1\dots i_sj_1\dots j_r}_\sigma\Xi^\sigma_{j_1\dots j_r}\Big]\otimes ds_{i_1\dots i_s}
\eeq
for every vertical vector field $\Xi\colon W\longrightarrow V(W)$.

The advantage of this approach consists in the possibility of working on contact forms (although only in the particular case of $1$-contact forms of degree at most $n+1$) using the tools of the theory of variational morphisms and returning back to forms at the end of the manipulation.

It appears that the above identification of $1$-contact forms with variational morphisms holds true up to $(n+1)$-forms and, indeed, this could be related to the non-uniqueness of the \textit{source forms} providing the so-called \textit{Helmholtz conditions}. In fact, as discussed in \cite{Palese:VariationalSeq}
(pag. 32), this feature appears for $k$-contact $n$-horizontal $(n+k)$-forms with $k\ge 2$.

\subsection{Comparison for top forms}
\label{sec:topforms}

In this section we directly compare the two integration by parts procedures. 

As a first step, in the following proposition we prove that the splitting \eqref{eq:split} of $p_1\rho$, seen as a variational morphism $\mathbb{V}_\rho$ of codegree $s=0$,
\beq
<\mathbb{V}_\rho|J^r\Xi>=<\mathbb{E}|\Xi>+\text{Div}(<\mathbb{T}|J^{r-1}\Xi>)
\eeq
and the decomposition
\beq
p_1\rho=\mathcal{I}(\rho)+d_H\mathcal{R}(\rho)
\eeq
give the same terms.

\begin{prop}
\label{prop:Volume}
Given $\rho\in\Omega^r_{n+1}W$ a $1$-contact $(n+1)$-form.
For every section $\Xi\colon W\longrightarrow V(W)$,
\beq
<\mathbb{V}_\rho|J^r\Xi> =
J^r\Xi\lrcorner p_1\rho
\eeq
 and 
\beq
<\mathbb{E}|\Xi>=J^r\Xi\lrcorner\,\mathcal{I}(\rho) \,, \quad 
\text{Div} (<\mathbb{T}|J^{r-1}\Xi>)=J^r\Xi\lrcorner\,d_H\mathcal{R}(\rho).
\eeq
\end{prop}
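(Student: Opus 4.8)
The strategy is to make both sides of the claimed identities explicit in the fibered chart $(W^r,\psi^r)$ and to observe that they are built from the \emph{same} raw data, namely the horizontal forms $\eta^J_\sigma$ with coefficients $A^{i_1\dots i_n J}_\sigma$ (here $s=n-((n+1)-1)=0$, so the $ds_{i_1\dots i_s}$ is the volume density $ds$ and the multi-indices $i_1\dots i_s$ are absent). Concretely, I would first record that, by the very definition of $\mathbb{V}_\rho$ given in Section~\ref{sec:formsmorph}, we have
$<\mathbb{V}_\rho|J^r\Xi> = J^r\Xi\lrcorner p_1\rho$, which is the first assertion and requires nothing further. The content is in matching $\mathbb{E}$ with $\mathcal{I}(\rho)$ and $\mathrm{Div}(\mathbb{T})$ with $d_H\mathcal{R}(\rho)$.

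For the volume part, I would contract $J^r\Xi$ into the coordinate expression \eqref{eq:IEO} of $\mathcal{I}(\rho)$. Since $\rho$ is $1$-contact of degree $n+1$, $p_1\rho$ is already $p_1\rho$ itself and $k=1$, so \eqref{eq:IEO} reads $\mathcal{I}(\rho)=\omega^\sigma\wedge\sum_{|J|=0}^r(-1)^{|J|}\mathrm{d}_J(\tfrac{\partial}{\partial y^\sigma_J}\lrcorner\,p_1\rho)$. Contracting with $J^r\Xi$ kills the $\omega^\sigma_J$ for $|J|\ge 1$ and turns $\omega^\sigma$ into $\Xi^\sigma$, yielding $\big[\sum_{|J|=0}^r(-1)^{|J|}\mathrm{d}_J A^J_\sigma\big]\Xi^\sigma\otimes ds$. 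On the morphism side, the local formula for $\mathbb{E}$ recalled after the codegree-$0$ splitting is exactly $<\mathbb{E}|\Xi>=\big[\big(\hat v_A-\nabla_{j_1}\hat v^{j_1}_A+\dots+(-1)^r\nabla_{j_1\dots j_r}\hat v^{j_1\dots j_r}_A\big)\Xi^A\big]\otimes ds$, and by the identification $\hat v^J_A \leftrightarrow A^J_\sigma$ the two coincide once one notes that on horizontal forms the covariant-derivative iterated operators $\nabla_J$ and the total derivatives $\mathrm{d}_J$ act identically on the relevant densities (the connection coefficients drop out because $ds$ is covariantly constant up to the trace term and $\mathbb{E}$ is the trace-free combination); equivalently, the Euler--Lagrange-type alternating sum is connection independent, which is the standard fact that the volume part is canonical. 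The remaining identity $\mathrm{Div}(<\mathbb{T}|J^{r-1}\Xi>)=J^r\Xi\lrcorner d_H\mathcal{R}(\rho)$ then follows by subtraction: contracting \eqref{eq:32}, i.e. $p_1\rho=\mathcal{I}(\rho)+d_H\mathcal{R}(\rho)$ (using $p_1dp_1\mathcal{R}(\rho)=d_H\mathcal{R}(\rho)$ on a $1$-contact form), with $J^r\Xi$ and using the two identities already established gives $J^r\Xi\lrcorner d_H\mathcal{R}(\rho)= <\mathbb{V}_\rho|J^r\Xi>-<\mathbb{E}|\Xi> = \mathrm{Div}(<\mathbb{T}|J^{r-1}\Xi>)$ by \eqref{eq:split}.

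For the last step to be fully honest one should check independently that contraction commutes with $d_H$ in the sense $J^r\Xi\lrcorner d_H\mathcal{R}(\rho)=\mathrm{Div}(J^{r-1}\Xi\lrcorner\mathcal{R}(\rho))$, matching $\mathbb{T}$ with $\mathcal{R}(\rho)$ term by term; this uses property~(1) of total derivatives of forms, $d_H\rho=(-1)^q\mathrm{d}_i\rho\wedge dx^i$, together with the Leibniz rule~(2) and the observation that for a \emph{horizontal} $n$-form $\mu=\mu\,ds$ one has $d_H(\omega^\sigma\wedge\mu)= -\omega^\sigma\wedge \mathrm{d}_i\mu\wedge dx^i + \omega^\sigma_i\wedge dx^i\wedge\mu$, so that contracting with $J^r\Xi$ reproduces exactly the recurrence \eqref{eq:Tcomponents} that defines the coefficients $\hat t^{\,i\dots}_A$ of $\mathbb{T}$. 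I expect the main obstacle to be precisely this bookkeeping: showing that the recursive definition of the boundary coefficients $\hat t$ coming from repeated integration by parts in the variational-morphism formalism matches the single "closed form" expression $\mathcal{R}(\rho)=\tfrac{1}{k}\sum\dots$ produced by the interior Euler operator machinery, and in particular checking that the connection-dependent terms in $\nabla_J$ cancel against each other in the alternating sums so that $\mathbb{E}$ is genuinely connection independent and equals the manifestly connection-free $\mathcal{I}(\rho)$. Once the $\mathbb{E}$-identity is pinned down, the $\mathbb{T}$-identity is forced by uniqueness of the codegree-$0$ splitting \eqref{eq:split} (the boundary part is unique because there are no nonzero divergenceless terms to spoil it once the volume part is fixed), so no extra computation is strictly needed there.
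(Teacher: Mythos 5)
Your proof is correct, and the first half (the identity $<\mathbb{E}|\Xi>=J^r\Xi\lrcorner\,\mathcal{I}(\rho)$, obtained by contracting the coordinate expression \eqref{eq:IEO} and choosing the flat fibered connection on the coordinate domain so that $\nabla_J$ reduces to $\mathrm{d}_J$) is essentially the paper's Step~1. Where you genuinely diverge is the boundary identity: you obtain $\text{Div}(<\mathbb{T}|J^{r-1}\Xi>)=J^r\Xi\lrcorner\,d_H\mathcal{R}(\rho)$ by subtraction, contracting \eqref{eq:32} with $J^r\Xi$ and comparing with \eqref{eq:split} once the volume parts are matched; this is logically sufficient for the statement as written and avoids all combinatorics. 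The paper instead computes $\mathcal{R}(\rho)$ explicitly, writing $p_1\rho=\sum_{|I|=0}^r\mathrm{d}_I(\omega^\sigma\wedge\xi^I_\sigma)$ and collecting the coefficients of $\omega^\sigma_L\wedge ds_i$ degree by degree to show they satisfy precisely the recurrence \eqref{eq:Tcomponents}, concluding $\mathcal{R}(\rho)=-\sum_{|J|=0}^{r-1}t^{iJ}_\sigma\omega^\sigma_J\wedge ds_i$ before taking $d_H$ and contracting. Your route is shorter; the paper's buys the explicit term-by-term correspondence $\mathbb{T}\leftrightarrow-\mathcal{R}(\rho)$, which is exactly what is reused in Proposition~\ref{prop:Boundary} and throughout Section~\ref{sec:lowdeg}, so the bookkeeping you correctly identify as the ``main obstacle'' is not optional for the paper's later purposes even though it is dispensable here. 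Two small imprecisions to fix: the justification that the connection terms drop out is cleanest as the paper states it (pick the connection with vanishing coefficients on the single chart $W$), rather than via covariant constancy of $ds$; and your closing appeal to ``uniqueness of the boundary part'' overstates matters, since by the paper's own discussion $\mathbb{T}$ is only determined modulo a divergenceless term --- fortunately the proposition only asserts an equality of divergences, so your subtraction argument is unaffected.
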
 
\begin{proof}
\textit{Step 1.} The fact that $<\mathbb{E}|\Xi>=J^r\Xi\lrcorner\,\mathcal{I}(\rho)$ follows directly from the definition of each side of the equation. In fact, since $W$ is a single coordinate domain, we can always choose the fibered connection whose coefficients are all null, then the covariant derivatives reduce to total derivatives and the variational morphism $\mathbb{E}$ takes the form:
\beq
<\mathbb{E}|\Xi> = [(A_\sigma-d_{j_1}A_\sigma^{j_1}+\dots +(-1)^rd_{j_1\dots j_r}A_\sigma^{j_1\dots j_r})\Xi^\sigma ]\otimes ds.
\eeq
On the other hand, from the definition of the interior Euler operator, we have
\bEq \label{topInterior}
\mathcal{I}(\rho) = \omega^\sigma\wedge\sum_{|J|=0}^r(-1)^{|J|}\text{d}_J\eta^J_\sigma =  \omega^\sigma\wedge\Big(\sum_{|J|=0}^r(-1)^{|J|}d_JA^J_\sigma\Big)ds =  \nonumber \\
= \omega^\sigma\wedge (A_\sigma-d_{j_1}A^{j_1}_\sigma+\dots +(-1)^rd_{j_1\dots j_r}A^{j_1\dots j_r}_\sigma )ds.
\eEq

\textit{Step 2.} In order to compare $J^r\Xi\lrcorner\,d_H\mathcal{R}(\rho)$ with $\text{Div}(<\mathbb{T}|J^{r-1}\Xi>)$ we need to compute explicitly the Residual operator $\mathcal{R}(\rho)$. 
We now write
$
p_1\rho=\sum_{|I|=0}^r\text{d}_I\big(\omega^\sigma\wedge\xi^I_\sigma\big)
$, 
where
\beq
\xi^I_\sigma =\sum_{|J|=0}^{r-|I|}(-1)^{|J|}\binom{|J|+|I|}{|J|}\text{d}_J\eta^{IJ}_\sigma= \bigg(\sum_{|J|=0}^{r-|I|}(-1)^{|J|}\binom{|J|+|I|}{|J|}\text{d}_JA^{IJ}_\sigma\bigg) ds \,.
\eeq
The summand $\omega^\sigma\wedge\xi_\sigma$ is the interior Euler operator, so we consider only the remaining terms $\sum_{|I|=1}^r\text{d}_I\big(\omega^\sigma\wedge\xi^I_\sigma\big)$. Each form $\omega^\sigma\wedge\xi^I_\sigma$ is a $1$-contact $(n+1)$-form and thence can be recast as $\omega^\sigma\wedge\xi^I_\sigma=\chi^I\wedge ds$, where $\chi^I$ is a $1$-contact $1$-form locally given as 
\beq
\chi^I=(\sum_{|J|=0}^{r-|I|}(-1)^{|J|}\binom{|J|+|I|}{|J|}\text{d}_JA^{IJ}_\sigma )\omega^\sigma.
\eeq
Finally, the   Residual operator is defined by
\beq
& & \mathcal{R}(\rho)=\sum_{|I|=0}^{r-1}(-1)^{1}\text{d}_I\chi^{iI}\wedge ds_i= \\
& & =\sum_{|I|=0}^{r-1}-\text{d}_I\bigg(\sum_{|J|=0}^{r-|iI|}(-1)^{|J|}\binom{|J|+|iI|}{|J|}\omega^\sigma\text{d}_JA^{iIJ}_\sigma\bigg)\wedge ds_i.
\eeq
Now, from the second equation above, we compute the coefficients of the forms $\omega^\sigma_L\wedge ds_i$ according to the length of the multi-index $L$:
\begin{description}
\item[$|L|=r-1$.] The only contribution to this coefficient comes from setting $|I|=r-1$ and applying all the total derivatives of forms $\text{d}_I$ to $\omega^\sigma$, thus obtaining
as coefficient $A^{iL}_\sigma$.
\item[$|L|=r-2$.] One contribution comes from setting $|I|=r-2$ and applying all the total derivatives of forms $\text{d}_I$ to $\omega^\sigma$, getting the coefficient
\beq
-(A^{iL}_\sigma-rd_lA^{liL}_\sigma),
\eeq
another contribution comes from setting $|I|=r-1$ and applying $r-2$ total derivatives to $\omega^\sigma$ and one to $A^{iI}_\sigma$, thus obtaining
\beq
-\binom{r-1}{1}d_lA^{liL}_\sigma,
\eeq
summing together these two terms, we get the coefficient
\beq
-(A^{iL}_\sigma-d_lA^{liL}_\sigma).
\eeq
\item[$|L|=r-3$.] We can take $|I|=r-3$ and apply all the total derivatives to $\omega^\sigma$, getting a term
\beq
- A^{iL}_\sigma+\binom{r-1}{1}d_lA^{liL}_\sigma-\binom{r}{2}d_{lk}A^{lkiL}_\sigma \,,
\eeq
another contribution comes from setting $|I|=r-2$ and applying only $r-3$ derivatives to $\omega^\sigma$, getting a term
\beq
-\binom{r-2}{1}d_lA^{liL}_\sigma+\binom{r}{1}\binom{r-2}{1}d_{lk}A^{lkiL}_\sigma \,,
\eeq
finally, we can take $|I|=r-1$ and apply $r-3$ derivatives to $\omega^\sigma$, getting a term
\beq
-\binom{r-1}{2}d_{lk}A^{lkiL}_\sigma,
\eeq
summing all together these contributions we obtain the following coefficient
\beq
-(A^{iL}_\sigma-d_lA^{liL}_\sigma+d_{lk}A^{lkiL}_\sigma) \,. 
\eeq 
In general, we thus get 
\item[$|L|=r-m$, $1\leq m \leq r$.]
\beq
-(A^{iL}_\sigma-d_{j_1}A^{iL j_1}_\sigma+   \dots (-1)^{m-1}  d_{j_1 j_2j_3\dots j_{m-1}}A^{iL j_1 j_2j_3\dots j_{m-1}}_\sigma ) \,. 
\eeq
\end{description}
It appears clear then  that the coefficients of the forms $\omega^\sigma_L\wedge ds_i$ are defined by the same recurrence relations \eqref{eq:Tcomponents} which express the components $t^{iJ}_\sigma$ of the variational morphism $\mathbb{T}$, except for the sign. Thence we can write for the Krbek-Musilov\'a Residual operator
\beq
\mathcal{R}(\rho)=\sum_{|J|=0}^{r-1}-t^{iJ}_\sigma\omega^\sigma_J\wedge ds_i.
\eeq
Therefore, $d_H\mathcal{R}(\rho)=p_1dp_1\mathcal{R}(\rho)
=\sum_{|J|=0}^{r-1}(d_it^{iJ}_\sigma\omega^\sigma_J\wedge ds+t^{iJ}_\sigma\omega^\sigma_{Ji}\wedge ds )$, and since $\Xi^\sigma_{Ji}=d_i\Xi^\sigma_J$, we finally get 
\beq
& & J^r\Xi\lrcorner\,d_H\mathcal{R}(\rho)=\sum_{|J|=0}^{r-1}\Big(d_it^{iJ}_\sigma\Xi^\sigma_J+t^{iJ}_\sigma\Xi^\sigma_{Ji}\Big)\wedge ds= \\ 
& & = \sum_{|J|=0}^{r-1}\Big(d_it^{iJ}_\sigma\Xi^\sigma_J+t^{iJ}_\sigma d_i\Xi^\sigma_J\Big)\wedge ds \\
& &  =d_i\Big(\sum_{|J|=0}^{r-1}t^{iJ}_\sigma\Xi^\sigma_J\Big)\wedge ds= \text{Div}(<\mathbb{T}|J^{r-1}\Xi>).
\eeq
\end{proof}

\bRm 
Note that, since $d_H ds_{i_1\dots i_s}=0$,  the latter equation can be generalized to hold true for every $0\le s\le n$.
Indeed, let $\mathbb{T}\colon J^tY\longrightarrow (J^{r-1}V(Y))^\ast\otimes A_{n-s}(X)$ be a variational morphism according to Definition \ref{defin:morph}, with
\beq
<\mathbb{T}|J^{r-1}\Xi> = (\sum_{|J|=0}^{r-1}t^{i_1\dots i_sJ}_\sigma\Xi^\sigma_J)\wedge ds_{i_1\dots i_s}
\eeq
for any vertical vector field $\Xi$ over $Y$. Then there is a correspondence between $\mathbb{T}$ and a $(n-s)$-horizontal $1$-contact $(n-s+1)$-form $\tilde{\mathcal{R}}$ such that
\beq
\text{Div}(<\mathbb{T}|J^{r-1}\Xi>)=J^r\Xi\lrcorner\,d_H\tilde{\mathcal{R}}
\eeq
where $\tilde{\mathcal{R}}$ is defined by
\beq
\tilde{\mathcal{R}} = (\sum_{|J|=0}^{r-1}-t^{i_1\dots i_sJ}_\sigma\omega^\sigma_J )\wedge ds_{i_1\dots i_s} \,.
\eeq
In spite of the suggestive notation, $\tilde{\mathcal{R}}$ does {\em not} deal with 
a Residual operator at this stage. 
\eRm

\section{The main result: comparison for lower degree forms}    \label{sec:lowdeg}

The more intricate case of $1$-contact $(n-s+1)$-forms, seen as variational morphisms of codegree $0<s\le n$, is discussed. 

First we show that, in general, for $k$-contact $(n-s)$-horizontal $(n-s+k)$-forms, with an adequate manipulation, it is possible to obtain a decomposition analogous to \eqref{eq:32}. 
In Proposition \ref{prop:div} we characterize the boundary term by a local differential operator which {\em extends to the case of $k$-contact  forms of lower degree the Krbek-Musilov\'a's  Residual operator}. It is defined for forms of lower degree {\em of any order of contactness $k$}.

In Proposition \ref{prop:splitlike} we show that, when we restrict to $k=1$, this decomposition is indeed equivalent to the application of a ``canonical splitting''-like algorithm to the corresponding variational morphism. We also define a suitable {\em extension of the interior Euler operator} to $1$-contact forms of lower degree.
In this specific case, we also provide some examples, in particular  for $r=1 \,, 0 \le s <n$ and for $r=2 \, , s=1$, where we compare this decomposition with \eqref{eq:splitcodeg}; the latter example shows that the difference between the respective  boundary terms is indeed a local divergence, dealing with the splitting \eqref{eq:33}.
\medskip

Let  $\pi\colon Y\longrightarrow X$, and $U\subseteq X$ an open subset and $W=\pi^{-1}(U)$. Let $\rho\in\Omega^r_{n-s+k}W$ be a $(n-s)$-horizontal $k$-contact $(n-s+k)$-form defined on the $r$-order jet prolongation $W^r$ of $W$. In a local fibered chart $\psi^r=(x^i,y^\sigma,y^\sigma_I)$, we can write:
\bEq \label{expression of prho}
p_k\rho=\sum_{|J|=0}^r\omega^\sigma_J\wedge\eta^J_\sigma\,\in\Omega^{r+1}_{n-s+k}W
\eEq
where $\eta^J_\sigma$ are local $(n-s)$-horizontal $(k-1)$-contact $(n-s+k-1)$-forms defined on $W^{r+1}$.
Again, according to \cite{KrbekMusi:Representation}, we can rewrite 
\beq 
p_k\rho =\sum_{|J|=0}^r\omega^\sigma_J\wedge\eta^J_\sigma=\sum_{|I|=0}^r\text{d}_I\big(\omega^\sigma\wedge\xi^I_\sigma\big)
 =\omega^\sigma\wedge\xi_\sigma+\sum_{|I|=1}^r\text{d}_I\big(\omega^\sigma\wedge\xi^I_\sigma\big)
\eeq
where
\bEq \label{KrMu lemma}
\xi^I_\sigma=\sum_{|J|=0}^{r-|I|}(-1)^{|J|}\binom{|J|+|I|}{|J|}d_J\eta^{IJ}_\sigma.
\eEq

In analogy with \textit{Step 2} of the above Proposition (which can be recovered for $k=1\,, s=0$)
we work out the term $\sum_{|I|=1}^r\text{d}_I\big(\omega^\sigma\wedge\xi^I_\sigma\big)$. 
Each form $\omega^\sigma\wedge\xi^I_\sigma$ is a $(n-s)$-horizontal $k$-contact $(n-s+k)$-form, thus it can be recast in the following manner:
\bEq
\label{eq:sost}
\omega^\sigma\wedge\xi^I_\sigma=\chi^{i_1\dots i_sI}\wedge ds_{i_1\dots i_s}
\eEq
where the $\chi^{i_1\dots i_sI}$ are local $k$-contact $k$-forms. 
Renaming the multi-index $I$ and extracting an antisymmetric part, we obtain:
\bEq
\label{eq:dec1} 
& &  \sum_{|I|=1}^r\text{d}_I\big(\omega^\sigma\wedge\xi^I_\sigma\big)
= \sum_{|I|=0}^{r-1}\text{d}_i\text{d}_I\chi^{i_1\dots i_siI}\wedge ds_{i_1\dots i_s}= 
 \\
& &  =\sum_{|I|=0}^{r-1}\text{d}_i\text{d}_I\Big(\chi^{i_1\dots i_siI}-\chi^{[i_1\dots i_si]I}\Big)\wedge ds_{i_1\dots i_s}+\text{d}_i\sum_{|I|=0}^{r-1}\text{d}_I\chi^{[i_1\dots i_si]I}\wedge ds_{i_1\dots i_s}. \nonumber 
\eEq

\begin{lemma} \label{chi} 
We have
\bEq \label{EspressioneChi}
\sum_{|I|=0}^{r-1}\text{d}_i\text{d}_I (\chi^{i_1\dots i_siI}-\chi^{[i_1\dots i_si]I} )\wedge ds_{i_1\dots i_s} \equiv 0 \,.
\eEq
\end{lemma}

\begin{proof}
Let us consider the terms
\beq
\text{d}_i \chi^{i_1\dots i_siI} \wedge ds_{i_1\dots i_s}
\eeq
here a sum on the index $i$ is understood, let us write it explicitly in order to manipulate it:
\beq
\text{d}_1 \chi^{i_1\dots i_s 1I} \wedge ds_{i_1\dots i_s} + \text{d}_2 \chi^{i_1\dots i_s2I} \wedge ds_{i_1\dots i_s} + \text{d}_3 \chi^{i_1\dots i_s 3I}  \wedge ds_{i_1\dots i_s}+ \dots
\eeq
Let ${i}_p\neq i$,  $1\leq p\leq s$, we have
\beq
ds_{{i}_1\dots {i}_s}=dx^i\wedge ds_{{i}_1\dots {i}_s i}
\eeq
which holds for each $i\neq {i}_1,\dots, {i}_s $ and is intended {\em without summation over $i$}.
Let us substitute for each convenient index in the summation. We get
\beq
\text{d}_1 \chi^{i_1\dots i_s 1I} \wedge dx^1\wedge ds_{{i}_1\dots {i}_s 1}+ \dots+\text{d}_i \chi^{i_1\dots i_s iI}  \wedge dx^i\wedge ds_{{i}_1\dots {i}_s i}  +\dots
\eeq
Now, by rearranging the summation (see in detail the procedure in the proof of next Proposition)
note that in $ds_{{i}_1\dots {i}_si}$ the indices ${i}_1\dots {i}_s i$ are antisymmetric for each $i$ therefore necessarily each of the coefficients must satisfy $\chi^{i_1\dots i_s 1I}= \chi^{[i_1\dots i_s 1]I}$, $\chi^{i_1\dots i_s 2I}= \chi^{[i_1\dots i_s 2]I}$, \dots, therefore for each $i\neq {i}_1,\dots, {i}_s $
$\chi^{i_1\dots i_s iI}= \chi^{[i_1\dots i_s i]I}$.
Thus if ${i}_p\neq i$,  $1\leq p\leq s$,
\beq
\sum_{|I|=0}^{r-1}\text{d}_i\text{d}_I (\chi^{i_1\dots i_siI}-\chi^{[i_1\dots i_si]I} )\wedge ds_{i_1\dots i_s} = \\
\sum_{|I|=0}^{r-1}\text{d}_j\text{d}_I (\chi^{[i_1\dots i_si]I}-\chi^{[i_1\dots i_si]I} ) \wedge dx^j \wedge ds_{i_1\dots i_s i} \equiv 0 \,.
\eeq

Let us now discuss the cases  $i = i_p $,  for some $1\leq p\leq s$. In each of these cases we have obviously $\chi^{[i_1\dots  i i_si]I}=0$, being the index $i$ repeated twice.
On the other hand, taking into account the antisymmetry of $ds_{i_1\dots i i_s}$ in all its indices, and the fact that we have two sums on indices $i$'s which always must coincide, we can easily check that
\beq
 \sum_{|I|=0}^{r-1}\text{d}_i\text{d}_I \chi^{i_1\dots ( i i_s) iI}  \wedge ds_{i_1\dots i i_s} \equiv 0 \implies
 \sum_{|I|=0}^{r-1}\text{d}_i\text{d}_I \chi^{i_1\dots i (i_si)I}  \wedge ds_{i_1\dots i  i_s} \equiv 0 \,,
\eeq
This in turn implies that
\beq
 \sum_{|I|=0}^{r-1}\text{d}_i\text{d}_I \chi^{i_1\dots i i_s iI}  \wedge ds_{i_1\dots i i_s} =
  \sum_{|I|=0}^{r-1}\text{d}_i\text{d}_I \chi^{i_1\dots [ i i_s] iI} \wedge ds_{i_1\dots i i_s} \,,
  \eeq
but this last term is equal (up to a numerical coefficient) to  $\sum_{|I|=0}^{r-1}\text{d}_i\text{d}_I  \chi^{i_1\dots   i_s [ i i]I} $ $  \wedge $ $ ds_{i_1\dots i i_s} $ which is identically vanishing.

Indeed, let 
$d_i \Omega^{i_1\dots i_{s-2} i i_s i} \byd \sum_{|I|=0}^{r-1}\text{d}_i\text{d}_I \chi^{i_1\dots i i_si I}$. Since $d_i \Omega^{[i_1\dots i_{s-2} i i_s i]}=0$ then $d_i \Omega^{i_1\dots i_{s-2} i i_s i}$ is reduced according to Definition \ref{def:1.5}. 
On the other hand we also have $d_i \Omega^{i_1\dots i_{s-2} i (i_s i)}=0$,
thus (as a consequence of \cite{FF:Natural}, Lemma $6.2.43$ p. $164$) we recover the fact  that
$d_i \Omega^{i_1\dots i_{s-2} i i_s i} = d_i \Omega^{i_1\dots i_{s-2} i i_s i_{s-1}}  \equiv 0$.
We shall specify this result for  $k=1$ in the next Section (see Example \ref{2}).
\end{proof}

Let us now explicate the second summand of equation \eqref{eq:dec1} which will play a fundamental role for the application to Lepage equivalents in Section \ref{Lepage equivalents}.
\begin{prop} \label{prop:div}
Let $\rho\in\Omega^r_{n-s+k}W$ be a $(n-s)$-horizontal $k$-contact $(n-s+k)$-form defined on the $r$-order jet prolongation $W^r$ of $W$. 
We have 
\bEq \label{eq:dh1}
& &  \text{d}_i\sum_{|I|=0}^{r-1}\text{d}_I\chi^{[i_1\dots i_si]I}\wedge ds_{i_1\dots i_s}=\\ 
& &  d_H\bigg(\sum_{|I|=0}^{r-1}(-1)^k\frac{1}{(s+1)}\text{d}_I\chi^{[i_1\dots i_si]I}\wedge ds_{i_1\dots i_si}\bigg) =d_H \mathscr{R}(\rho) \,, \nonumber
\eEq
 where $\mathscr{R}(\rho)$ is a local $(n-s-1)$-horizontal $k$-contact $(n-s-1+k)$-form.
\end{prop}
\begin{proof}
First we rewrite expression \eqref{eq:dh1} using a summation over \textit{ordered} indices $\tilde{i}_1\le\dots\le\tilde{i}_s$ instead of $i_1\dots i_s$:
\beq
\text{d}_i\sum_{|I|=0}^{r-1}\text{d}_I\chi^{[i_1\dots i_si]I}\wedge ds_{i_1\dots i_s}=\text{d}_i\sum_{|I|=0}^{r-1}s!\,\text{d}_I\chi^{[\tilde{i}_1\dots\tilde{i}_si]I}\wedge ds_{\tilde{i}_1\dots\tilde{i}_s}.
\eeq
Then we expand this sum, using Einstein's convention for the summation over multi-indices $I$:
\beq
& & \text{d}_i\sum_{|I|=0}^{r-1}s!\, \text{d}_I\chi^{[\tilde{i}_1\dots\tilde{i}_si]I}\wedge ds_{\tilde{i}_1\dots\tilde{i}_s}= \\
& & =s!\Big( \text{d}_1\text{d}_I\chi^{[\tilde{i}_1\dots\tilde{i}_s1]I}\wedge ds_{\tilde{i}_1\dots\tilde{i}_s}+
\text{d}_2\text{d}_I\chi^{[\tilde{i}_1\dots\tilde{i}_s2]I}\wedge ds_{\tilde{i}_1\dots\tilde{i}_s}+\dots \\
& &  \dots +\text{d}_n\text{d}_I\chi^{[\tilde{i}_1\dots\tilde{i}_sn]I}\wedge ds_{\tilde{i}_1\dots\tilde{i}_s}\Big).
\eeq
Let now again write
\beq
ds_{\tilde{i}_1\dots\tilde{i}_s}=dx^i\wedge ds_{\tilde{i}_1\dots\tilde{i}_si}
\eeq
without summation on the index $i$ and with $i\neq\tilde{i}_1,\dots,\tilde{i}_s$. Then we can proceed as follows:
\bEq
\label{eq:passo2}
& & \text{d}_i\sum_{|I|=0}^{r-1}s!\,\text{d}_I\chi^{[\tilde{i}_1\dots\tilde{i}_si]I}\wedge ds_{\tilde{i}_1\dots\tilde{i}_s}= \\
& & = s!\Big(\text{d}_1\text{d}_I\chi^{[\tilde{i}_1\dots\tilde{i}_s1]I}\wedge dx^1\wedge ds_{\tilde{i}_1\dots\tilde{i}_s1}+
\text{d}_2\text{d}_I\chi^{[\tilde{i}_1\dots\tilde{i}_s2]I}\wedge dx^2\wedge ds_{\tilde{i}_1\dots\tilde{i}_s2}+\dots     \nonumber  \\
& & \dots +\text{d}_n\text{d}_I\chi^{[\tilde{i}_1\dots\tilde{i}_sn]I}\wedge dx^n\wedge ds_{\tilde{i}_1\dots\tilde{i}_sn}\Big) \,. \nonumber 
\eEq
For each term of the last equation, we have the same indices in $\chi^{[\tilde{i}_1\dots\tilde{i}_si]I}$ and in $ds_{\tilde{i}_1\dots\tilde{i}_si}$, though without summation on $i$. 

Our goal is to take each term $\text{d}_i\text{d}_I\chi^{[\tilde{i}_1\dots\tilde{i}_si]I}\wedge dx^i\wedge ds_{\tilde{i}_1\dots\tilde{i}_si}$ without summation on any index and find a way to write it in the form
\beq
\text{d}_l\text{d}_I\chi^{[\tilde{i}_1\dots\tilde{i}_si]I}\wedge dx^l\wedge ds_{\tilde{i}_1\dots\tilde{i}_si}
\eeq
with summation on the index $l$. In order to make explicit the reasoning, let us consider the term with $i=1$
\beq
\text{d}_1\text{d}_I\chi^{[\tilde{i}_1\dots\tilde{i}_s1]I}\wedge dx^1\wedge ds_{\tilde{i}_1\dots\tilde{i}_s1}
\eeq
again without summation over any index. We have obviously that:
\beq
\text{d}_1\text{d}_I\chi^{[\tilde{i}_1\dots\tilde{i}_s1]I}\wedge dx^1\wedge ds_{\tilde{i}_1\dots\tilde{i}_s1}=\sum_{l\neq\tilde{i}_1,\dots,\tilde{i}_s}\text{d}_l\text{d}_I\chi^{[\tilde{i}_1\dots\tilde{i}_s1]I}\wedge dx^l\wedge ds_{\tilde{i}_1\dots\tilde{i}_s1}.
\eeq
The remaining terms when $l=\tilde{i}_1,\dots,l=\tilde{i}_s$ can be found among the other summands of the right hand side of equation \eqref{eq:passo2}. As an example, consider the case when $l=\tilde{i}_1$: among the terms
\beq
\text{d}_{\tilde{i}_1}\text{d}_I\chi^{[\tilde{j}_1\dots\tilde{j}_s\tilde{i}_1]I}\wedge dx^{\tilde{i}_1}\wedge ds_{\tilde{j}_1\dots\tilde{j}_s\tilde{i}_1}
\eeq
(with summation on the ordered indices $\tilde{j}_1\le\dots\le\tilde{j}_s$) there will certainly be a summand
of the form
\beq
\text{d}_{\tilde{i}_1}\text{d}_I\chi^{[1\tilde{i}_2\dots\tilde{i}_s\tilde{i}_1]I}\wedge dx^{\tilde{i}_1}\wedge ds_{1\tilde{i}_2\dots\tilde{i}_s\tilde{i}_1}
\eeq
(this time without summation on any index) which can be recast as
\beq
\text{d}_{\tilde{i}_1}\text{d}_I\chi^{[\tilde{i}_1\dots\tilde{i}_s1]I}\wedge dx^{\tilde{i}_1}\wedge ds_{\tilde{i}_1\dots\tilde{i}_s1}.
\eeq
This is exactly the term we were searching for. Proceeding in the same way for $l=\tilde{i}_2,\dots,l=\tilde{i}_s$, we can finally obtain the expression
\beq
\text{d}_l\text{d}_I\chi^{[\tilde{i}_1\dots\tilde{i}_s1]I}\wedge dx^l\wedge ds_{\tilde{i}_1\dots\tilde{i}_s1}
\eeq
(with summation on $l$) as wanted. Moreover, we remark that the total number of summands of equation \eqref{eq:passo2} is $N=n\binom{n-1}{s}$, while the number of ordered strings of $(s+1)$ indices is $N'=\binom{n}{s+1}$, and $N=(s+1)N'$. This means that for every single term of equation \eqref{eq:passo2} $\text{d}_i\text{d}_I\chi^{[\tilde{i}_1\dots\tilde{i}_si]I}\wedge dx^i\wedge ds_{\tilde{i}_1\dots\tilde{i}_si}$ (without summation on any index), there are other $s$ terms with the same indices, just in a different order.

Therefore, it is not difficult to see that
\beq
\text{d}_i\sum_{|I|=0}^{r-1}s!\,\text{d}_I\chi^{[\tilde{i}_1\dots\tilde{i}_si]I}\wedge ds_{\tilde{i}_1\dots\tilde{i}_s}=\text{d}_l\sum_{|I|=0}^{r-1}s!\,\text{d}_I\chi^{[\tilde{i}_1\dots\tilde{i}_s\tilde{i}_{s+1}]I}\wedge dx^l\wedge ds_{\tilde{i}_1\dots\tilde{i}_s\tilde{i}_{s+1}}
\eeq
with summation on ordered indices $\tilde{i}_1\le\dots\le\tilde{i}_s\le\tilde{i}_{s+1}$. Passing to a summation on non-ordered indices $i_1\dots i_{s+1}$ and using commutation properties of wedge products, we can finally write:
\beq
& & \text{d}_i \sum_{|I|=0}^{r-1}\text{d}_I\chi^{[i_1\dots i_si]I}\wedge ds_{i_1\dots i_s}= \\
& & = \text{d}_l\sum_{|I|=0}^{r-1}\frac{1}{(s+1)}\,\text{d}_I\chi^{[i_1\dots i_{s+1}]I}\wedge dx^l\wedge ds_{i_1\dots i_{s+1}}= \\
& & =(-1)^{n-s-1+k}\text{d}_l\bigg(\sum_{|I|=0}^{r-1}(-1)^k\frac{1}{(s+1)}\,\text{d}_I\chi^{[i_1\dots i_{s+1}]I}\wedge ds_{i_1\dots i_{s+1}}\bigg)\wedge dx^l= \\
& & =d_H\bigg(\sum_{|I|=0}^{r-1}(-1)^k\frac{1}{(s+1)}\,\text{d}_I\chi^{[i_1\dots i_{s+1}]I}\wedge ds_{i_1\dots i_{s+1}}\bigg) \byd d_H \mathscr{R}(\rho).
\eeq
\end{proof}

\bDf
Let $\rho\in\Omega^r_{n-s+k}W$ be a $(n-s)$-horizontal $k$-contact $(n-s+k)$-form and let $
p_k\rho=\sum_{|J|=0}^r\omega^\sigma_J\wedge\eta^J_\sigma\,\in\Omega^{r+1}_{n-s+k}W
$.

We define the {\em Residual operator associated with $\rho$} as the operator $\mathscr{R}$ locally characterized by the above Proposition where the forms $\chi$'s are defined 
in terms of the forms $\eta$'s by equations \eqref{KrMu lemma} and \eqref{eq:sost}. 
\eDf

\subsection{Comparison in the case $k=1$}

In order to compare the results above with the canonical splitting of variational morphisms, let us restrict to the case $k=1$.

\begin{prop} \label{prop:splitlike} 
Let $p_1\rho=\sum_{|L|=0}^r\omega^\sigma_L\wedge\eta^L_\sigma\,\in\Omega^{r+1}_{n-s+1}W$, with 
\beq
\eta^L_\sigma=A^{i_1\dots i_sL}_\sigma \wedge ds_{i_1\dots i_s}
\eeq
where  the coefficients $A^{i_1\dots i_s L}_\sigma$ are defined on $J^{r+1}W$.
We have a ``canonical splitting''-like decomposition
\bEq
\label{eq:splittinglike}
& p_1\rho= \sum_{k=0}^{r} (-1)^k d_{a_k \dots a_3 a_2 a_1}  A^{[i_1\dots i_s a_1]a_2 a_3 \dots a_k}_\sigma \omega^\sigma\wedge ds_{i_1\dots i_s} +\\  
& + d_H [\frac{1}{(s+1)}\sum_{|L|=0}^{r-1} - \hat{t}^{i_1\dots i_siL}_\sigma\omega^\sigma_L\wedge ds_{i_1\dots i_si} ]  \nonumber 
\eEq
where the coefficients $\hat{t}$'s are defined iteratively by:
\bEq
\label{eq:itert}
& & \hat{t}^{i_1\dots i_sil_1\dots l_{r-1}}_\sigma=A^{[i_1\dots i_si]l_1\dots l_{r-1}}_\sigma  \nonumber \\
& & \hat{t}^{i_1\dots i_sil_1\dots l_{r-2}}_\sigma=A^{[i_1\dots i_si]l_1\dots l_{r-2}}_\sigma-d_k\hat{t}^{i_1\dots i_sil_1\dots l_{r-2}k}_\sigma \\
& & \dots  \nonumber \\
& & \hat{t}^{i_1\dots i_si}_\sigma=A^{[i_1\dots i_si]}_\sigma-d_k\hat{t}^{i_1\dots i_sik}_\sigma \,. \nonumber 
\eEq
\end{prop}

\begin{proof}
We have
\beq
\xi^I_\sigma= (\sum_{|J|=0}^{r-|I|}(-1)^{|J|}\binom{|J|+|I|}{|J|}d_JA^{i_1\dots i_sIJ}_\sigma )ds_{i_1\dots i_s}
\eeq
and consequently, in the case $k=1$, we put
\bEq \label{eq:chi}
\chi^{i_1\dots i_sI}=\sum_{|J|=0}^{r-|I|}(-1)^{|J|}\binom{|J|+|I|}{|J|}d_JA^{i_1\dots i_sIJ}_\sigma\omega^\sigma \,,
\eEq
therefore Proposition \ref{prop:div} specializes as
\bEq
\label{eq:div}
& & \text{d}_i\sum_{|I|=0}^{r-1}\text{d}_I\chi^{[i_1\dots i_si]I}\wedge ds_{i_1\dots i_s} = \\
& & = d_H [\sum_{|I|=0}^{r-1}\frac{-1}{(s+1)}\text{d}_I\chi^{[i_1\dots i_si]I}\wedge ds_{i_1\dots i_si} ]= \nonumber \\
& & =d_H [\frac{-1}{(s+1)}\sum_{|I|=0}^{r-1}\text{d}_I (\sum_{|J|=0}^{r-|iI|}(-1)^{|J|}\binom{|J|+|iI|}{|J|}d_JA^{[i_1\dots i_si]IJ}_\sigma\omega^\sigma)\wedge ds_{i_1\dots i_si} ]. \nonumber 
\eEq
The explicit expression of this term can be computed as described in \textit{Step 2} of the proof of Proposition \ref{prop:Volume}. In particular, if we develop the total derivatives $\text{d}_I$ inside the sum in equation \eqref{eq:div} and collect the coefficients of the contact forms of the same order, we obtain the expression
\bEq
\label{eq:t}
& & d_H\bigg[\frac{-1}{(s+1)}\sum_{|I|=0}^{r-1}\text{d}_I\Big(\sum_{|J|=0}^{r-|iI|}(-1)^{|J|}\binom{|J|+|iI|}{|J|}d_JA^{[i_1\dots i_si]IJ}_\sigma\omega^\sigma\Big)\wedge ds_{i_1\dots i_si}\bigg]= \nonumber \\
& &  =d_H [-\frac{1}{(s+1)}\sum_{|L|=0}^{r-1}\hat{t}^{i_1\dots i_siL}_\sigma\omega^\sigma_L\wedge ds_{i_1\dots i_si} ] \,.
\eEq
Because of its role in the extension of the interior Euler operator to the cases of lower degree, we now proceed in particular  to 
determine  the coefficient of $\omega^\sigma\wedge ds_{i_1\dots i_s}$. 
The first term we need to consider is obviously  
\beq
A^{i_1\dots i_s}_\sigma \omega^\sigma \wedge ds_{i_1\dots i_s} \,.
\eeq
Put $B^{i_1\dots i_siIJ}_\sigma=A^{i_1\dots i_siIJ}_\sigma-A^{[i_1\dots i_si]IJ}_\sigma$.
The other contributions then,  of course, come from
\bEq \label{omegasigma}
\sum_{|I|=0}^{r-1}\text{d}_i\text{d}_I\bigg(\sum_{|J|=0}^{r-|iI|}(-1)^{|J|}\binom{|J|+|iI|}{|J|}d_JB^{i_1\dots i_siIJ}_\sigma \bigg) \omega^\sigma\wedge ds_{i_1\dots i_s} \,.
\eEq 
In other words, renaming $iI\to I$, we compute the expression:
\beq
\sum_{|J|=0}^r(-1)^{|J|}d_JA^{i_1\dots i_sJ}_\sigma+\sum_{|I|=1}^r\sum_{|J|=0}^{r-|I|}(-1)^{|J|}\binom{|J|+|I|}{|J|}d_Id_JB^{i_1\dots i_sIJ}_\sigma.
\eeq
We collect the terms according to the length $| L| $ of the multi-index $J$ in the first sum and $IJ$ in the second sum:
\begin{description}
\item[$ | L |=0$] Since the second double sum   starts with $|I|=1$, we have only the first term $A^{i_1\dots i_s}_\sigma$ from the first sum.
\item[$  | L |=1$] From the first sum we get the term $-d_iA^{i_1\dots i_si}_\sigma$, while from the second sum, considering $|I|=1$ and $|J|=0$, we get $d_iB^{i_1\dots i_si}_\sigma$. Summing up these contribution we obtain:
\beq
-d_iA^{i_1\dots i_si}_\sigma+d_iB^{i_1\dots i_si}_\sigma=-d_iA^{[i_1\dots i_si]}_\sigma.
\eeq
\item[$ | L |=2$] From the first sum we get the term $d_{ab}A^{i_1\dots i_sab}_\sigma$, while in the second sum we can choose $|I|=2$ and $|J|=0$, getting the term $d_{ab}B^{i_1\dots i_sab}_\sigma$, or $|I|=1$ and $|J|=1$, getting the term $-2d_{ab}B^{i_1\dots i_sab}_\sigma$. Summing all together, we obtain:
\beq
d_{ab}A^{i_1\dots i_sab}_\sigma-d_{ab}B^{i_1\dots i_sab}_\sigma=d_ad_bA^{[i_1\dots i_sa]b}_\sigma.
\eeq
\end{description}
It is not difficult to see that, for general length $| L|$, we obtain the term:
\beq
& & (-1)^{| L |} d_{a_1\dots a_{ | L |}}A^{i_1\dots i_sa_1\dots a_{ | L |}}_\sigma+\sum_{k=0}^{ | L| -1}(-1)^k\binom{| L |}{k}d_{a_1\dots a_{ | L |}}B^{i_1\dots i_sa_1\dots a_{ | L |}}_\sigma= \\
& & =(-1)^{| L |}  \Big(d_{a_1\dots a_{ | L |}}A^{i_1\dots i_sa_1\dots a_{ | L |}}_\sigma-d_{a_1\dots a_{ | L |}}B^{i_1\dots i_sa_1\dots a_{ | L |}}_\sigma\Big)= \\
& & =(-1)^{| L |}  d_{a_1\dots a_{ | L |}}A^{[i_1\dots i_sa_1]a_2\dots a_{ | L |}}_\sigma.
\eeq
where we used
$\sum_{k=0}^{|L | -1}(-1)^k\binom{|L|}{k}=-(-1)^{| L |} $. Thence finally, the coefficient of the form $\omega^\sigma\wedge ds_{i_1\dots i_s}$ is given by:
\bEq \label{Interior lower}
 A^{i_1\dots i_s}_\sigma-d_{a_1}A^{[i_1\dots i_sa_1]}_\sigma+d_{a_1}d_{a_2}A^{[i_1\dots i_sa_1]a_2}_\sigma+\dots \\ 
 \dots +(-1)^rd_{a_1}\dots d_{a_r}A^{[i_1\dots i_sa_1]a_2\dots a_r}_\sigma \,. \nonumber
\eEq

We now study the terms generated by $\omega^\sigma_L\wedge ds_{i_1\dots i_s}$. 
We explicate the total derivatives of forms in the following expression
\bEq
\label{eq:calcolisplitting}
& & \sum_{|I|=0}^{r-1}\text{d}_i\text{d}_I\Big(\chi^{i_1\dots i_siI}-\chi^{[i_1\dots i_si]I}\Big)\wedge ds_{i_1\dots i_s}=  \\
& & =\sum_{|I|=0}^{r-1}\text{d}_i\text{d}_I\bigg(\sum_{|J|=0}^{r-|iI|}(-1)^{|J|}\binom{|J|+|iI|}{|J|}d_JB^{i_1\dots i_siIJ}_\sigma \omega^\sigma  \bigg) \wedge ds_{i_1\dots i_s} \, \nonumber 
\eEq
and get the coefficients of these terms according to the length of $L$.
 
\begin{description}
\item[$|L|=r$] The only contribution to this coefficient comes from setting $|I|=r-1$ in equation \eqref{eq:calcolisplitting} and applying all the total derivatives to $\omega^\sigma$. Renaming the multi-index $L=lL'$ (hence with $|L'|=|L|-1$), we obtain:
\beq
B^{i_1\dots i_slL'}_\sigma=A^{i_1\dots i_slL'}_\sigma-A^{[i_1\dots i_sl]L'}_\sigma.
\eeq
\item[$|L|=r-1$] One first contribution comes from setting $|I|=r-2$ and applying all the total derivatives to $\omega^\sigma$, getting:
\beq
B^{i_1\dots i_sL}_\sigma-rd_iB^{i_1\dots i_sLi}_\sigma.
\eeq
Another contribution comes from setting $|I|=r-1$ and applying one derivative to $B$ and the others to $\omega^\sigma$, getting:
\beq
d_iB^{i_1\dots i_siL}_\sigma+\binom{r-1}{1}d_iB^{i_1\dots i_sLi}_\sigma.
\eeq
Summing up these contributions, and renaming $L=lL'$, we obtain:
\beq
& & B^{i_1\dots i_slL'}_\sigma-d_iB^{i_1\dots i_slL'i}_\sigma+d_iB^{i_1\dots i_silL'}_\sigma= \\
& &  =A^{i_1\dots i_slL'}_\sigma-A^{[i_1\dots i_sl]L'}_\sigma-d_iA^{i_1\dots i_slL'i}_\sigma+d_iA^{[i_1\dots i_sl]L'i}_\sigma+ \\
& & +d_iA^{i_1\dots i_silL'}_\sigma-d_iA^{[i_1\dots i_si]lL'}_\sigma= \\
& & =A^{i_1\dots i_slL'}_\sigma-A^{[i_1\dots i_sl]L'}_\sigma+d_iA^{[i_1\dots i_sl]L'i}_\sigma-d_iA^{[i_1\dots i_si]lL'}_\sigma.
\eeq
\item[$|L|=r-2$] One contribution comes from setting $|I|=r-3$ and applying all the derivatives to $\omega^\sigma$. Another contribution comes from setting $|I|=r-2$ and applying only $r-2$ derivatives to $\omega^\sigma$. One last contribution comes from setting $|I|=r-1$ and applying only $r-2$ derivatives to $\omega^\sigma$.
Summing up, the coefficient for $|L|=r-2$ is:
\beq 
&  A^{i_1\dots i_slL'}_\sigma-A^{[i_1\dots i_sl]L'}_\sigma+d_iA^{[i_1\dots i_sl]L'i}_\sigma -d_iA^{[i_1\dots i_si]lL'}_\sigma \\
&  \quad - d_{ai}A^{[i_1\dots i_sl]L'ai}_\sigma  +d_{ia}A^{[i_1\dots i_si]alL'}_\sigma \,. \nonumber 
\eeq 

\item[$|L|=r-g$, $0\leq g \leq r$]  in general we have
\beq 
\sum_{m=0}^{g}(-1)^{m}  d_{j_m\dots j_1}(A^{[i_1\dots i_sl]L'  j_{m} \dots j_1}_\sigma  
- A^{[i_1\dots i_sj_m] j_{m-1} \dots j_1 lL' }_\sigma) \omega^\sigma_{lL'} \wedge ds_{i_1\dots i_s}  \,.
\eeq 
\end{description}
Note that we recover expressions \eqref{Interior lower} for $|L|=|lL'|=0$.
 
Integrating by parts we obtain
\beq
\sum_{m=0}^{g}(-1)^{m} d_l (d_{j_m\dots j_1}(A^{[i_1\dots i_sl]L'  j_{m} \dots j_1}_\sigma  - A^{[i_1\dots i_sj_m] j_{m-1} \dots  j_1 lL' }_\sigma) \omega^\sigma_{L'} ) \wedge ds_{i_1\dots i_s}  +
\\
- \sum_{m=0}^{g}(-1)^{m} d_l (d_{j_m\dots  j_1}(A^{[i_1\dots i_sl]L'  j_{m} \dots  j_1}_\sigma  - A^{[i_1\dots i_sj_m] j_{m-1} \dots   j_1 lL' }_\sigma) )\omega^\sigma_{L'} \wedge ds_{i_1\dots i_s} 
\eeq
where the last piece (apart the case $g=0$, contributing the coefficient of $\omega^\sigma_{L'} \wedge ds_{i_1\dots i_s}$) is zero because of the symmetry in the indices $l$ and $j_m$. 

For $g\neq 0$, and any choice of  $l= j_p$  $1\leq p\leq r $, taking into account the symmetry of the indices $j_m\dots j_p \dots j_3j_2j_1$, 
the two summands of the first piece become 
\beq
d_{j_p} d_{j_m\dots j_p \dots j_3j_2j_1}(A^{[i_1\dots i_s j_p ]L'  j_{m} \dots j_2j_1}_\sigma  - A^{[i_1\dots i_s j_m] j_{m-1} \dots j_2 j_1 j_p L' }_\sigma) \equiv 0 \,.
\eeq
For $g=0$ and  $l\neq i_p$
\beq
 d_l ( (A^{[i_1\dots i_sl]L' }_\sigma  - A^{[i_1\dots i_s ]  l L'}_\sigma) \omega^\sigma_{L'} ) \wedge ds_{i_1\dots i_s} 
 =  \\ d_l ((A^{[i_1\dots [i_s l ]]L' }_\sigma  - A^{[i_1\dots [i_s ]  l ]L' }_\sigma) \omega^\sigma_{L'} ) \wedge dx^l \wedge ds_{i_1\dots i_s l} 
 = 0 \,.
\eeq
For the case $l = i_p$, for some $p$, we refer to the discussion in the general case $g\geq1$, Lemma \ref{chi}.
\end{proof}

\bDf
Expression  \eqref{Interior lower} defines a {\em local interior Euler operator for lower degree $1$-contact forms}
\bEq \label{lowinterior}
\mathfrak{I}(\rho) 
= \omega^\sigma\wedge (A^{i_1\dots i_s}_\sigma-d_{a_1}A^{[i_1\dots i_sa_1]}_\sigma+d_{a_1}d_{a_2}A^{[i_1\dots i_sa_1]a_2}_\sigma+\dots \nonumber \\ 
 \dots +(-1)^rd_{a_1}\dots d_{a_r}A^{[i_1\dots i_sa_1]a_2\dots a_r}_\sigma  )ds_{i_1\dots i_s} \,.
\eEq
It is an expression analogous to \eqref{topInterior} and the study of its uniqueness and  globality properties will be done elsewhere.
\eDf

\bRm \label{antisymm}
For the cases $s=1$ and $k\geq1$, the splitting 
$
p_1\rho=\omega^\sigma\wedge\xi_\sigma+d_H(\sum_{|I|=0}^{r-1}-\frac{1}{(s+1)}\text{d}_I\chi^{i_1\dots i_siI}\wedge ds_{i_1\dots i_si})
$
first appeared in \cite{Palese:VariationalSeq}, however the explicit local expressions for $\xi_\sigma$ are obtained here for the first time (for $k=1$). Furthermore, a Residual operator is here locally characterized {\em for any $s\geq 1$}. 
\eRm

\bRm
Let us consider 
$p_1\rho\in\Omega^{r+1}_{n-s+1}W$ as a variational morphism
\beq
\mathbb{V}_\rho\colon J^{r+1}W\longrightarrow (J^rV(W))^\ast\otimes A_{n-s}(U)
\eeq
with
$
<\mathbb{V}_\rho|J^r\Xi> $ $:=$ $J^r\Xi\lrcorner\,p_1\rho $ $=$ $
(A^{i_1\dots i_s}_\sigma\Xi^\sigma+\dots +A^{i_1\dots i_sj_1\dots j_r}_\sigma\Xi^\sigma_{j_1\dots j_r})\wedge ds_{i_1\dots i_s}$.
On the other hand, according to Proposition \ref{prop:splitlike}, with  
$p_1\rho\in\Omega^{r+1}_{n-s+1}W$ we also associate two variational morphisms
\beq
& &\mathbb{E}'\colon J^{2r+1}W\longrightarrow (J^{r}V(W))^\ast\otimes A_{n-s}(U) \\
& &\mathbb{T}'\colon J^{2r}W\longrightarrow (J^{r-1}V(W))^\ast\otimes A_{n-s-1}(U),
\eeq
such that 
$
<\mathbb{V}_\rho|J^r\Xi>=<\mathbb{E}'|J^r\Xi> $ $+$ $\text{Div}(<\mathbb{T}'|J^{r-1}\Xi>)$ and in particular we have the correspondences:
$
<\mathbb{E}'|J^r\Xi> $ $=$ $J^r\Xi\lrcorner (\omega^\sigma\wedge\xi_\sigma
)$, 
$\text{Div}(<\mathbb{T}'|J^{r-1}\Xi>)$ $=$ $J^r\Xi\lrcorner\,d_H(\sum_{|I|=0}^{r-1}-\frac{1}{(s+1)}\text{d}_I\chi^{[i_1\dots i_si]I}\wedge ds_{i_1\dots i_si})$.

By equations \eqref{eq:t} and \eqref{eq:itert} 
\beq
<\mathbb{T}'|J^{r-1}\Xi>=(\hat{t}^{i_1\dots i_{s+1}}_\sigma\Xi^\sigma + \dots +\hat{t}^{i_1\dots i_{s+1}j_1\dots j_{r-1}}_\sigma\Xi^\sigma_{j_1\dots j_{r-1}})\wedge ds_{i_1\dots i_si_{s+1}}. \,,
\eeq
while 
\beq
<\mathbb{E}'|J^r\Xi>= (\hat{e}^{i_1\dots i_s}_\sigma\Xi^\sigma+\hat{e}^{i_1\dots i_sj_1}_\sigma\Xi^\sigma_{j_1}+\dots +\hat{e}^{i_1\dots i_sj_1\dots j_r}_\sigma\Xi^\sigma_{j_1\dots j_r})\wedge ds_{i_1\dots i_s}
\eeq
and the general form of the coefficients $\hat{e}$ are given by Proposition \ref{prop:splitlike}; they can be compared with the  canonical splitting  morphisms according to \cite{FF:Natural}. 
\eRm

In view of possible applications to \textit{momentum morphisms}, \ie $1$-contact $(n-1)$-forms (related to the Poincar\'e-Cartan morphism), we make now some explicit examples for the cases $r=1,2$.

\bEx \label{1}
Let $r=1$, $0\le s<n$ and let $\rho\in\Omega^1_{n-s+1}W$ be a $1$-contact $(n-s)$-horizontal form on $W^1$. Then the two splitting formulae:
\beq
<\mathbb{V}_\rho|J^1\Xi>=<\mathbb{E}|J^1\Xi>+\text{Div}(<\mathbb{T}|\Xi>)=  <\mathbb{E}'|J^1\Xi>+\text{Div}(<\mathbb{T}'|\Xi>)
\eeq
give the same result. In other words:
\beq
\mathbb{E}'=\mathbb{E} \,, \qquad
\mathbb{T}'=\mathbb{T}.
\eeq
\medskip
Indeed, we observe that locally we have
\beq
p_1\rho=\big(A^{i_1\dots i_s}_\sigma\omega^\sigma+A^{i_1\dots i_sj}_\sigma\omega^\sigma_j\big)\wedge ds_{i_1\dots i_s}\in\Omega^2_{n-s+1}W.
\eeq
Thence, for any vertical vector field $\Xi\colon W\longrightarrow V(W)$, the variational morphism $\mathbb{V}_\rho\colon J^2W\longrightarrow\big(J^1V(W)\big)^\ast\otimes A_{n-s}(U)$ is defined by:
\beq
<\mathbb{V}_\rho|J^1\Xi>=\frac{1}{s!}\big(s!A^{i_1\dots i_s}_\sigma\Xi^\sigma+s!A^{i_1\dots i_sj}_\sigma\Xi^\sigma_j\big)\wedge ds_{i_1\dots i_s}.
\eeq
Applying to this expression of $\mathbb{V}_\rho$ the canonical splitting algorithm from the theory of variational morphisms (see \cite{FF:Natural}), we have the following volume term and boundary term, respectively, which we compare with our `canonical splitting'-like decomposition \eqref{eq:splittinglike}:
\beq
& &  \mathbb{E}=\frac{1}{s!}\big[s!\big(A^{i_1\dots i_s}_\sigma-d_kA^{[i_1\dots i_sk]}_\sigma\big)\omega^\sigma+s!\big(A^{i_1\dots i_sj}_\sigma-A^{[i_1\dots i_sj]}_\sigma\big)\omega^\sigma_j\big]\wedge ds_{i_1\dots i_s} \\
 & &  = \mathbb{E}' + (A^{i_1\dots i_sj}_\sigma-A^{[i_1\dots i_sj]}_\sigma\big)\omega^\sigma_j \wedge ds_{i_1\dots i_s}\,.
\eeq
However, according to Proposition \ref{prop:splitlike}, $(A^{i_1\dots i_sj}_\sigma-A^{[i_1\dots i_sj]}_\sigma\big)\omega^\sigma_j \wedge ds_{i_1\dots i_s} \equiv 0$. Furthermore, 
\beq
& & \mathbb{T}=\frac{1}{(s+1)!}\big[s!A^{[i_1\dots i_si]}_\sigma\omega^\sigma\big]\wedge ds_{i_1\dots i_si} =  \mathbb{T}' \,.
\eeq
\eEx

\bEx \label{2}
Let $r=2$ and $s=1$ and let $\rho\in\Omega^2_nW$ be a $1$-contact $(n-1)$-horizontal form on $W^2$. Then there exists a variational morphism
\beq
\alpha\colon J^{4}W\longrightarrow (J^{1}V(W))^\ast\otimes A_{n-2}(U)
\eeq
such that the decompositions
\beq
 <\mathbb{V}_\rho|J^2\Xi> = <\mathbb{E}|J^2\Xi>+\text{Div}(<\mathbb{T}|J^{1}\Xi> )
  = <\mathbb{E}'|J^2\Xi>+\text{Div}(<\mathbb{T}'|J^{1}\Xi>)
\eeq
are related by
\bEq
\label{eq:Lepage}
\mathbb{E}'=\mathbb{E}-\mathcal{D}\alpha \,, \quad \mathbb{T}'=\mathbb{T}+\alpha \,, 
\eEq
where
$\mathcal{D}\alpha\colon J^{5}W\longrightarrow (J^{2}V(W))^\ast\otimes A_{n-1}(U)$
is the unique variational morphism such that
$<\mathcal{D}\alpha|J^2\Xi>=\text{Div}(<\alpha|J^{1}\Xi>)$. 

In fact, let
$
p_1\rho=\big(A^i_\sigma\omega^\sigma+A^{ij_1}_\sigma\omega^\sigma_{j_1}+A^{ij_1j_2}_\sigma\omega^\sigma_{j_1j_2}\big)\wedge ds_i\in\Omega^3_n W$. 
Choosing the fibered connection whose coefficients vanish in the coordinate domain $W\subset Y$,  
\beq
& & <\mathbb{E}|J^2\Xi>=\Big[\Big(A^i_\sigma-d_aA^{[ia]}_\sigma-\frac{2}{3}d_bd_aA^{[ib]a}_\sigma\Big)\Xi^\sigma+\\
& & +\Big(A^{(ij_1)}_\sigma+\frac{2}{3}d_aA^{aij_1}_\sigma-\frac{2}{3}d_aA^{(ij_1)a}_\sigma\Big)\Xi^\sigma_{j_1}+A^{(ij_1j_2)}_\sigma\Xi^\sigma_{j_1j_2}\Big]\wedge ds_i 
\,.
 \nonumber
 \eeq
After some manipulations 
\bEq \label{eq:difference} 
& & \mathbb{E} = 
 \mathbb{E}' 
 +\big[ 
 (A^{ij_1}_\sigma-d_aA^{[ia]j_1}_\sigma-A^{[ij_1]}_\sigma+d_aA^{[ij_1]a}_\sigma)\omega^\sigma_{j_1} +  \\
 & &
 + (A^{ij_1j_2}_\sigma-A^{[ij_1]j_2}_\sigma)\omega^\sigma_{j_1j_2}\big]\wedge ds_i    +\mathcal{D}(\alpha) 
 \,. \nonumber
\eEq
where
\beq
\mathcal{D}(\alpha)=   - \big[ \frac{1}{3}d_bd_aA^{[ib]a}_\sigma\omega^\sigma+(\frac{1}{3}d_aA^{[ij_1]a}_\sigma+\frac{1}{3}d_aA^{[ia]j_1}_\sigma)\omega^\sigma_{j_1}
 +\frac{1}{3}A^{[ij_1]j_2}_\sigma\omega^\sigma_{j_1j_2} \big] \wedge ds_i \,, \nonumber
\eeq
and one can check that the second summand of \eqref{eq:difference} vanishes.

In order to compute $\mathbb{T}'$ from equation \eqref{eq:div}, let
$
\chi^{ij_1}=(A^{ij_1}_\sigma-2d_aA^{ij_1a}_\sigma)\omega^\sigma$, 
$\chi^{ij_1j_2}=A^{ij_1j_2}_\sigma\omega^\sigma $, then
\beq
& &  \mathbb{T}'=\sum_{|J|=0}^1\frac{1}{2}d_J\chi^{[i_1i_2]J}\wedge ds_{i_1i_2} =\frac{1}{2}\chi^{[i_1i_2]}\wedge ds_{i_1i_2}+\frac{1}{2}d_a\chi^{[i_1i_2]a}\wedge ds_{i_1i_2}= \\
& &  =\big[\frac{1}{2}(A^{[i_1i_2]}_\sigma-d_aA^{[i_1i_2]a}_\sigma)\omega^\sigma+\frac{1}{2}A^{[i_1i_2]a}_\sigma\omega^\sigma_a\big]\wedge ds_{i_1i_2} \,.
\eeq
Comparing with
\beq
<\mathbb{T}|J^1\Xi>=\frac{1}{2}\Big[\Big(A^{[i_1i_2]}_\sigma-\frac{2}{3}d_aA^{[i_1i_2]a}_\sigma\Big)\Xi^\sigma+\frac{4}{3}A^{[i_1i_2]j}_\sigma\Xi^\sigma_j\Big]\wedge ds_{i_1i_2} \,.
\eeq
 it is immediate to get:
\beq
\alpha:=\mathbb{T}'-\mathbb{T}=-\frac{1}{6}[d_aA^{[i_1i_2]a}_\sigma\omega^\sigma-A^{[i_1i_2]a}_\sigma\omega^\sigma_a ]\wedge ds_{i_1i_2} = - \frac{1}{6}d_a(A^{[i_1i_2]a}_\sigma\omega^\sigma)\wedge ds_{i_1i_2}.
\eeq
We note that 
$\alpha$ can be put in the form of a local divergence and in agreement with Lemma \ref{chi} one can check that indeed $\mathcal{D}\alpha\equiv 0$.

Therefore we understand that the difference among the variational morphism splitting as stated in coordinates in \cite{FF:Natural} and our `canonical splitting"-like decomposition consists in passing from one splitting to the other one just by adding and substracting an identically vanishing divergence, an aspect which  deals with the local nature of our decomposition. 
\eEx

As we shall see, the relation \eqref{eq:Lepage} suggests a  sort of `canonical' construction of \textit{Lepage equivalents} other than the usual Poincar\'e-Cartan form (see also the discussion about this point in \cite{Palese:VariationalSeq}). 
The uniqueness and globality properties of the terms $\mathbb{E}'$ and $\mathbb{T}'$ remain to be deeper investigated; in particular, the difference $\mathbb{
T}'-\mathbb{T}$ is a closed form, and, if global, it defines a de Rham cohomology class. This feature deals with other topological aspects of Lagrangian field theories \cite{CaPaWi16,FePaWi11,FrPaWi13,Pa16,PaWi17}
which will be the subject of future research. 

\section{The Krupka--Betounes equivalent for first order theories and a first glance to the second order}  \label{Lepage equivalents}

Suppose we have a fibered manifold $\pi\colon Y\longrightarrow X$, with $\text{dim}X=n$ and $\text{dim}Y=n+m$. Consider a local chart $(x^i,y^\sigma)$, $1\le i\le n$, $1\le \sigma\le m$ on $Y$, then on $J^1Y$ we have local fibered coordinates $(x^i,y^\sigma,y^\sigma_j)$, with $1\le i,j\le n$ and $1\le \sigma\le m$.

Let us consider a Lagrangian $\lambda$, defined on $J^1Y$, locally expressed as $\lambda=\mathscr{L}(x^i,y^\sigma,y^\sigma_j)ds$. 
As already mentioned, in \cite{Palese:VariationalSeq} (Lemma 4.5, example 4.6) a Residual operator for lower degree forms was obtained by the first author for the case $s=1$. Olga Rossi \cite{OlgaRossi} conjectured that, making use of it, the Krupka--Betounes Lepage equivalent for first order theories could be obtained by the application of  the following recurrence formulae:
\beq
& & \rho_1=\lambda-p_1\mathcal{R}(d\lambda)=\theta_\lambda\quad\text{(Poincar\'e-Cartan form of the Lagrangian)} \\
& &  \rho_2=\rho_1 - p_2\mathscr{R}(d\rho_1) \\
& &  \rho_3=\rho_2-p_3\mathscr{R}(d\rho_2) \\
& & \dots \\
& &  \rho_n=\rho_{n-1}-p_n\mathscr{R}(d\rho_{n-1})
\eeq
with $\mathcal{R}$ and $\mathscr{R}$ the residual operator for top forms and lower degree forms, respectively.

In the following, in view of a generalization to the second order case, we apply the recurrence formulae by using the
characterization of the Residual operator for lower degree forms as given in the present paper by Proposition \ref{prop:div}.

We compute explicitly the forms $\rho_1,\dots,\rho_n$. First we have the well-known Poincar\'e-Cartan form of $\lambda$:
\beq
\rho_1=\theta_\lambda=\mathscr{L}ds+p^i_\sigma\omega^\sigma\wedge ds_i\qquad p^i_\sigma:=\frac{\partial\mathscr{L}}{\partial y^\sigma_i}.
\eeq
In order to compute $\rho_2$ we observe that the  Residual operator $\mathscr{R}$ does not change the order of contactness of its argument, i.e. $p_2\mathscr{R}(d\rho_1)=\mathscr{R}(p_2d\rho_1)$. Thus we can reduce ourselves to consider only the 2-contact component of the differential of $\rho_1$. In particular:
\beq
& & p_2d\rho_1 =\omega^\sigma\wedge (\partial_\sigma p^i_\nu\omega^\nu\wedge ds_i)+\omega^\sigma_j\wedge (\partial^j_\sigma p^i_\nu\omega^\nu\wedge ds_i)= \\
& & =\omega^\sigma\wedge [\partial_\sigma p^i_\nu\omega^\nu\wedge ds_i-d_j(\partial^j_\sigma p^i_\nu\omega^\nu\wedge ds_i)]+d_j[\omega^\sigma\wedge (\partial^j_\sigma p^i_\nu\omega^\nu\wedge ds_i)] \,.
\eeq
Specializing  Proposition \ref{prop:div}  we have:
\beq
\mathscr{R}(p_2d\rho_1)=\frac{1}{2}\partial^j_\sigma p^i_\nu\omega^\sigma\wedge\omega^\nu\wedge ds_{ij}.
\eeq
Thus
\beq
\rho_2=\rho_1 - p_2\mathscr{R}(d\rho_1)=\mathscr{L}ds+p^i_\sigma\omega^\sigma\wedge ds_i+\frac{1}{2}\partial^i_\sigma p^j_\nu\omega^\sigma\wedge\omega^\nu\wedge ds_{ij}.
\eeq
Now we compute $\rho_3=\rho_2-p_3\mathscr{R}(d\rho_2)$. Again we can restrict our attention to the term
\beq
&  p_3d\rho_2 = \frac{1}{2}(\partial_{\sigma_1}\partial^i_{\sigma_2}p^j_{\sigma_3}\omega^{\sigma_1}\wedge\omega^{\sigma_2}\wedge\omega^{\sigma_3}\wedge ds_{ij}+\partial^k_{\sigma_1}\partial^i_{\sigma_2}p^j_{\sigma_3}\omega^{\sigma_1}_k\wedge\omega^{\sigma_2}\wedge\omega^{\sigma_3}\wedge ds_{ij}) \\
&  = \omega^{\sigma_1}\wedge (\dots)+d_k[\omega^{\sigma_1}\wedge\frac{1}{2}(\partial^k_{\sigma_1}\partial^i_{\sigma_2}p^j_{\sigma_3}\omega^{\sigma_2}\wedge\omega^{\sigma_3}\wedge ds_{ij})].
\eeq
Thence
\beq
p_3\mathscr{R}(d\rho_2)=-\frac{1}{6}\partial^k_{\sigma_1}\partial^i_{\sigma_2}p^j_{\sigma_3}\omega^{\sigma_1}\wedge\omega^{\sigma_2}\wedge\omega^{\sigma_3}\wedge ds_{ijk}
\eeq
and finally
\beq
& & \rho_3=\rho_2-p_3\mathscr{R}(d\rho_2)= \mathscr{L}ds+p^i_\sigma\omega^\sigma\wedge ds_i+\frac{1}{2!}\partial^{i_1}_{\sigma_1}p^{i_2}_{\sigma_2}\omega^{\sigma_1}\wedge\omega^{\sigma_2}\wedge ds_{i_1i_2}+ \\
 & & +\frac{1}{3!}\partial^{i_1}_{\sigma_1}\partial^{i_2}_{\sigma_2}p^{i_3}_{\sigma_3}\omega^{\sigma_1}\wedge\omega^{\sigma_2}\wedge\omega^{\sigma_3}\wedge ds_{i_1i_2i_3}.
\eeq
Proceeding in this way it is straightforward to see that
\bEq\label{r=1}
\rho_n=\mathscr{L}ds+\sum_{q=1}^n\frac{1}{q!}\frac{\partial^q\mathscr{L}}{\partial y^{\sigma_1}_{i_1}\dots\partial y^{\sigma_q}_{i_q}}\omega^{\sigma_1}\wedge\dots\wedge\omega^{\sigma_q}\wedge ds_{i_1\dots i_q}
\eEq
which is known as the \textit{Krupka-Betounes equivalent} of the Lagrangian $\lambda$ \cite{Betounes84,
Kru73,Kru77,Perez19,Saunders10}.

\bEx
We use the notation adopted in Example 3.5 of article~\cite{Palese:VariationalSeq}. The $(1+1)$-dimensional free quantum particle is a first order theory for the fibered manifold $\pi\colon\mathbb{R}^2\times\mathbb{R}^2\rightarrow\mathbb{R}^2$, with coordinates $(t,x,v,w)$, described by the Lagrangian
\beq
\lambda = -\Big(\frac{\hbar^2}{4m}(v^2_x+w^2_x)+\frac{\hbar}{2}(vw_t-v_tw) \Big) \, dt\wedge dx = \mathscr{L} (j^1\pi)dt\wedge dx \,.
\eeq
In order to compute $\rho_1=\lambda-p_1\mathcal{R}(d\lambda)$, we first have to consider $p_1\mathcal{R}(d\lambda)=\mathcal{R}(p_1d\lambda)$. We have:
\beq
p_1d\lambda=\Big(-\frac{\hbar}{2}w_t\omega^1+\frac{\hbar}{2}v_t\omega^2+\frac{\hbar}{2}w\omega^1_t-\frac{\hbar}{2}v\omega^2_t-\frac{\hbar^2}{2m}v_x\omega^1_x-\frac{\hbar^2}{2m}w_x\omega^2_x\Big)dt\wedge dx \,.
\eeq
Then
\beq
p_1\mathcal{R}(d\lambda)=\mathcal{R}(p_1d\lambda)=- \frac{\hbar}{2} (w\omega^1\wedge dx-v\omega^2\wedge dx+\frac{\hbar}{m}v_x\omega^1\wedge dt+\frac{\hbar}{m}w_x\omega^2\wedge dt)
\eeq
and finally
\beq
\rho_1=\lambda+\frac{\hbar}{2} (w\omega^1\wedge dx-v\omega^2\wedge dx+\frac{\hbar}{m}v_x\omega^1\wedge dt+\frac{\hbar}{m}w_x\omega^2\wedge dt ).
\eeq
Now the following (and last) step is $\rho_2=\rho_1-p_2\mathscr{R}(d\rho_1)$. We have:
\beq
p_2d\rho_1=\frac{\hbar}{2}\Big(\omega^2\wedge\omega^1\wedge dx-\omega^1 \wedge \omega^2\wedge dx+\frac{\hbar}{m}\omega^1_x\wedge \omega^1\wedge dt+\frac{\hbar}{m}\omega^2_x\wedge \omega^2\wedge dt\Big).
\eeq
But then
$p_2\mathscr{R}(d\rho_1) =\mathscr{R}(p_2d\rho_1)=0$, \ie $\rho_2=\rho_1$ is just the Poincar\'e-Cartan form. This is due to the specific form of the Lagrangian, indeed the only non zero second partial derivatives of $\mathscr{L} (j^1\pi)$ contributes to \eqref{r=1} within vanishing terms involving double wedge products of same $\omega$'s.
\eEx

\subsection{Lepage equivalents for second order theories}

Let us now consider a second order Lagrangian $\lam=\mathscr{L}(x^i,y^\sig ,y^\sig_j,y^\sig_{jk})ds$. 
As a first step we compute the Poincar\'e-Cartan form of the Lagrangian using the first of Olga Rossi's recurrence formulae: $\rho_1=\lam-p_1\cR(d\lam)=\tht_\lam$.
\beq
& & d\lam =p_\sig\ome^\sig\wed ds+p^j_\sig\ome^\sig_j\wed ds+p^{jk}_\sig\ome^\sig_{jk}\wed ds= \\
& & =d_jd_k(p^{jk}_\sig\ome^\sig\wed ds)+d_j[(p^j_\sig-2d_kp^{jk}_\sig)\ome^\sig\wed ds] +(d_kd_jp^{jk}_\sig-d_jp^j_\sig+p_\sig)\ome^\sig\wed ds.
\eeq
Using the formula for the residual operator in this particular case:
\beq
\cR=\sum_{|I|=0}^1(-1)^1d_I\chi^{Ij}\wed ds_j,
\eeq
we obtain:
\beq
p_1\cR(d\lam)=-[(p^j_\sig-2d_kp^{jk}_\sig+d_kp^{jk}_\sig)\ome^\sig\wed ds_j+p^{jk}_\sig\ome^\sig_k\wed ds_j].
\eeq
Then, as expected, the Poincar\'e-Cartan form of the Lagrangian $\lam$ is:
\beq
\tht_\lam = \mathscr{L}ds+(p^j_\sig-d_kp^{jk}_\sig)\ome^\sig\wed ds_j+p^{jk}_\sig\ome^\sig_k\wed ds_j.
\eeq
We now rename $f^j_\sig:=p^j_\sig-d_kp^{jk}_\sig$ and $f^{jk}_\sig:=p^{jk}_\sig$ and continue computing the second of Rossi's recurrence formulae: $\rho_2=\tht_\lam-p_2\mathscr{R}(d\tht_\lam)$. 

We are interested only in the $2$-contact component of $d\tht_\lam$ because $p_2\mathscr{R}=p_2\mathscr{R}p_2$, \ie the Residual operator $\mathscr{R}$ does not increase the order of contactness of its argument, thus let us write
\beq
& & p_2(d\tht_\lam) =  \ome^\sig\wed(\der_\sig f^i_{\sig_2}\ome^{\sig_2}\wed ds_i-\der_{\sig_1}^{j_1}f^i_\sig\ome^{\sig_1}_{j_1}\wed ds_i-\der_{\sig_1}^{j_1j_2}f^i_\sig\ome^{\sig_1}_{j_1j_2}\wed ds_i + \\
& & +\der_\sig f^{ij_2}_{\sig_2}\ome^{\sig_2}_{j_2}\wed ds_i)+\ome^\sig_j\wed(\der_\sig^jf^{ij_2}_{\sig_2}\ome^{\sig_2}_{j_2}\wed ds_i-\der_{\sig_1}^{j_1j_3}f^{ij}_\sig\ome^{\sig_1}_{j_1j_3}\wed ds_i) \,.
\eeq
Now we recast this expression in the form $\sum_{|I|=0}^1d_I(\ome^\sig\wed\xi^I_\sig)$ and apply the integration by parts lemma, so to get for the Residual operator
\beq
& & p_2\mathscr{R}(d\tht_\lam) = \frac{1}{2}\der_\sig^jf^{ij_2}_{\sig_2}\ome^{\sig_2}_{j_2}\wed ds_{ij}-\frac{1}{2}\der_{\sig_1}^{j_1j_3}f^{ij}_\sig\ome^{\sig_1}_{j_1j_3}\wed ds_{ij}= \\
& & =\frac{1}{2}\der_\sig^jf^{ij_2}_{\sig_2}\ome^{\sig_2}_{j_2}\wed ds_{ij} \,,
\eeq 
being $f^{ij}_\sig$ symmetric in $(ij)$. Hence we obtain the following formula for $\rho_2$:
\beq
\rho_2=\mathscr{L}ds+f^i_\sig\ome^\sig\wed ds_i+f^{ij}_\sig\ome^\sig_j\wed ds_i+\frac{1}{2}\der^{i_1}_{\sig_1} f^{i_2j_2}_{\sig_2}\ome^{\sig_1}\wed\ome^{\sig_2}_{j_2}\wed ds_{i_1i_2}
\eeq
It turns out that it is possible to apply this reasoning at any step of Rossi's recurrence formulae, \eg at the third step we get 
\beq
p_3\mathscr{R}(d\rho_2)=-\frac{1}{6}\der^{i_1}_{\sig_1}\der^{i_2}_{\sig_2}f^{i_3j}_\sig\ome^{\sig_1}\wed\ome^{\sig_2}\wed\ome^\sig_j\wed ds_{i_1i_2i_3}.
\eeq
Proceeding in this way, the final expression of $\rho_n$ has the form:
\beq
& & \rho_n =\mathscr{L}ds+f^i_\sig\ome^\sig\wed ds_i+ \\
& &  +\sum_{q=1}^n\frac{1}{q!}\frac{\der\mathscr{L}}{\der y^{\sig_1}_{i_1}\dots\der y^{\sig_{q-1}}_{i_{q-1}}\der y^{\sig_q}_{i_qj}}\,\ome^{\sig_1}\wed\dots\wed\ome^{\sig_{q-1}}\wed\ome^{\sig_q}_j\wed ds_{i_1\dots i_q}
\eeq
We remark that the integration by parts is not uniquely defined in the case $r=2$. Indeed,  if we take into account in the decomposition of $p_2(d\rho_1)=p_2(d \tht_\lam )$ also the term
$\ome^\sig_j\wed(\der^j_\sig f^i_{\sig_2}\ome^{\sig_2}\wed ds_i)$ and work on it as above,
we get an additional term in $\rho_n$, thus obtaining a sort of generalization at the second order of the Krupka-Betounes equivalent, namely :
\beq
 \rho_n =\mathscr{L}ds+\sum_{q=1}^n\frac{1}{q!}\frac{\der\mathscr{L}}{\der y^{\sig_1}_{i_1}\dots\der y^{\sig_{q-1}}_{i_{q-1}}\der y^{\sig_q}_{i_qj}}\,\ome^{\sig_1}\wed\dots\wed\ome^{\sig_{q-1}}\wed\ome^{\sig_q}_j\wed ds_{i_1\dots i_q}+ \\
 + f^i_\sig\ome^\sig\wed ds_i +\sum_{q=1}^{n-1}\frac{1}{(q+1)!}\frac{\der f^{i_{q+1}}_{\sig_{q+1}}}{\der y^{\sig_1}_{i_1}\dots\der y^{\sig_q}_{i_q}}\,\ome^{\sig_1}\wed\dots\wed\ome^{\sig_q}\wed\ome^{\sig_{q+1}}\wed ds_{i_1\dots i_qi_{q+1}} \,,
\eeq
which exactly reduces to the Krupka-Betounes equivalent  \eqref{r=1}, when the Lagrangian is of order $r=1$ (see \cite{Krupkova09} for a review on Lepage equivalents of order $r\geq 1$). 

The study of the properties of this Lepage equivalent according to \cite{Betounes84} (see also \cite{VoGaVa21}) will be the subject of a separate paper.

\bRm 
We point out that Rossi's recurrence formulae provide a well defined second order Lepage equivalent $\rho_n$,  satisfying the well known defining  properties:
\begin{itemize}
\item $h\rho_n=\lambda$;
\item $\mathcal{E}_\lambda=p_1d\rho_{n}$ is a source form.
\end{itemize}
Indeed, the first condition is trivially verified since all the terms that are added to the Lagrangian in the recurrence steps are contact and thence do not affect the horizontal part of $\rho_n$. 

Furthermore, we note that the only terms in $\rho_n$ that can contribute to the $1$-contact component of the differential are the horizontal terms (they contribute via the contact differential $d_C$) and the $1$-contact terms (they contribute via the horizontal differential $d_H$).
In other words, we have $p_1d\rho_n=p_1d\rho_1$. 

Now,  $\rho_1$ is a distinguished Lepage equivalent, the so-called {\em principal Lepage equivalent};
hence $p_1d\rho_1$ is by definition a source form and this shows that $\rho_n$ is indeed a Lepage equivalent too. The proof for any order $r$ can be done in analogy with  \cite{Palese:VariationalSeq}, Theorem $3.11$. 
\eRm

\section*{Acknowledgements}

The first author (MP) was supported by 
the Department of Mathematics - University of Torino project $PALM$\_$RILO$\_$20$\_$01$ 
and would like to acknowledge the contribution of the COST Action CA17139. 
The third author (FZ) was also supported by a PhD grant of the University of G\"ottingen. 



\end{document}